\documentclass[10pt]{article}
\usepackage{amssymb, amsmath, amsthm, bm, float, graphicx,  tabularx, geometry, mathrsfs, setspace, enumerate, appendix, csquotes, caption, url, yfonts, pslatex, sectsty, eulervm, xcolor, hyperref}
\usepackage[T1]{fontenc}
\usepackage[round]{natbib}
\setcounter{MaxMatrixCols}{20}
\newtheorem{theorem}{Theorem}

\newtheorem*{definition}{Definition}
\newtheorem{example}{Example}

\newtheorem{lemma}{Lemma}

\newtheorem{proposition}{Proposition}
\newtheorem{question}{Question}
\newtheorem{remark}{Remark}
\newtheorem{result}{Result}

\newcommand{\dom}{\text{dom}}

\newcommand{\even}{\textsc{even}}

\newcommand{\ifif}{\Leftrightarrow}

\newcommand{\odd}{\textsc{odd}}

\newcommand{\ran}{\text{ran}}

\newcommand{\gse}{\text{GE}}

\newcommand{\we}{\text{WE}}
\newcommand{\gpd}{\text{GPD}}

\newcommand{\wpd}{\text{WPD}}
\newcommand{\RR}{\mathbb{R}}

\newcommand{\N}{\mathbb{N}}

\newcommand{\Z}{\mathbb{Z}}
\newcommand{\NM}{\mathbb{N}^{-}}

\geometry{left=1.0in, right=1.0in, top=1.0in, bottom=1.0in}
\input{tcilatex}
\begin{document}

\title{Equitable preference relations on infinite utility streams}
\author{Ram Sewak Dubey\thanks{%
Department of Economics, Feliciano School of Business, Montclair State
University, Montclair, NJ 07043; E-mail: dubeyr@montclair.edu} \and %
Giorgio Laguzzi\thanks{%
University of Freiburg in the Mathematical Logic Group at Eckerstr. 1, 79104
Freiburg im Breisgau, Germany; Email: giorgio.laguzzi@libero.it}}

\date{\today}
\maketitle
\begin{abstract}
We propose generalized versions of strong equity and Pigou-Dalton transfer principle.
We study the existence and the real-valued representation of social welfare relations satisfying  these two generalized equity principles. 
Our results characterize the restrictions on one period utility domains for the equitable social welfare relations (i) to exist; and (ii) to admit real-valued representations.

\noindent \emph{Keywords:} \texttt{Anonymity,}\; \texttt{Generalized equity,}\; \texttt{Generalized Pigou-Dalton transfer principle,}\;  \texttt{Social welfare function,} \;  \texttt{Social welfare relation.} 

\noindent \emph{Journal of Economic Literature} Classification Numbers: \texttt{C65,} \texttt{D63,}\; \texttt{D71.}
\end{abstract}

\newpage

\section{Introduction}\label{s1}

The subject matter of \emph{inter-generational} equity has received considerable attention in economics and philosophy literature in recent times.
It deals with the important question of how to treat the welfare of infinite future generations compared to those living at the present time.
The equity principles employed in this literature can be classified in two broad categories, namely, procedural and consequentialist (or redistributive).

Equity concepts which show indifference between any pair of infinite streams whenever the changes do not alter the distribution of utilities are called procedural equity principles.
\citet{ramsey1928} while devising the idea of inter-generational equity, observed  that discounting one generation's utility relative to another's is \enquote{ethically indefensible}, and something that \enquote{arises merely from the weakness of the imagination}. 
\citet{diamond1965} formalized the concept of \enquote{equal treatment} of all generations (present and future) in the form of an \emph{anonymity} axiom (AN) on social preferences.
Anonymity requires that the society should be indifferent between two infinite streams of well-being, if one is obtained from the other by interchanging the levels of well-being of any two generations.

Equity concepts which require an alteration in the distribution of utilities are called consequentialist equity principles.
Several formulations of this equity notion has been discussed in the social choice literature.
Important redistributive equity axioms are \emph{Pigou-Dalton transfer principle} (PD), \emph{strong equity} (SE) and \emph{Hammond equity} (HE).
Originally, these equity notions were conceptualized in social choice models with finitely many agents or in finite time horizon economies, which consider finitely many generations.
However, these ideas have now been extended to the standard infinite horizon economy with infinitely many generations or in social choice models with infinitely many agents.
The standard framework in the literature treats infinite utility streams as elements of the set $X\equiv Y^{\N}$, where $Y\subset \RR$ is a non-empty subset of real numbers and $\N$ is the set of natural numbers. 
The set $Y$ describes all possible levels of utility that any generation can attain. 
The object of study in our paper is  \emph{social welfare relation} (a reflexive and transitive binary relation) which satisfies some equity axioms.
In case the social welfare relation is complete, it is known as \emph{social welfare order} and a real-valued function representing the social welfare order is termed as a \emph{social welfare function}.
The PD and SE axioms generate strict preference (ranking) for an infinite utility stream obtained from another after carrying out desirable redistribution.
In contrast, HE generates weak preference (ranking) in similar situations.
In what follows, we present a brief description of these axioms in the general setting of comparing infinite utility streams, $x, y\in X$.

Strong Equity compares two infinite utility streams ($x$ and $y$) in which all generations except two have the same utility levels in both utility streams. 
Regarding the two remaining generations (say, $i$ and $j$), one of the generations (say $i$) is better off in utility stream $x$, and the other generation ($j$) is better off in utility stream $y$, thereby leading to a situation of a conflict. 
Strong equity requires that if for both utility streams, it is generation $i$ which is worse off than generation $j$, then generation $i$ should choose (on behalf of the society) between $x$ and $y$. 
Thus, utility stream $x$ is socially preferred to $y$, since generation $i$ is better off in $x$ than in $y$.
In essence, strong equity axiom yields strict preference for inequality reducing transfers which do not alter the ranking of the rich and the poor generations involved in the transfer.
Strong equity was introduced by \citet{aspremont1977}, who referred to it as an \enquote{extremist equity axiom}.
The term \enquote{strong equity} has been used relative to the axiom introduced by \citet{hammond1976}, which he called the \enquote{equity axiom}.
Hammond explains his axiom to be in the spirit of the \enquote{weak equity axiom} of \citet{sen1973}.
His axiom is now referred to as the Hammond equity principle.
In the setting of strong equity ranking, Hammond equity yields weak preference of $x$ compared to $y$.

Pigou-Dalton transfer principle insists on exact transfer of the utility (welfare) from the rich generation to the worse off generation ($x_i-y_i=y_j-x_j$) for the desired social preference in the redistribution scheme described above for strong equity.
This inequality reducing property was initially hinted at by \citet[p. 24]{pigou1912} as \enquote{The Principle of Transfers}.
\citet[p. 351]{dalton1920} described it as \enquote{If there are only two income-receivers and a transfer of income takes place from the richer to the poorer, inequality is diminished.}.
In words, Pigou-Dalton transfer principle shows preference for non-leaky and non-rank-switching transfers.
This equity concept has been widely studied in literature on economic inequality and philosophy (see \citet{adler2013} for a discussion on philosophical foundation).%
\footnote{There are numerous variations of redistributive equity principles discussed in social choice literature.
Prominent examples include \enquote{proportional transfer principle and proportional ex post transfer principle } (\citet{fleurbaey2001}), \enquote{distributive fairness semiconvexity and strong distributive fairness semiconvexity} (\citet{sakai2003}),  \enquote{Hammond equity for the future} (\citet{asheim2007}, \citet{banerjee2006}), \enquote{altruistic equity (altruistic equity-1 (AE-1) and altruistic equity-2 (AE-2)} (\citet{hara2008} and \citet{sakamoto2012}), \enquote{weak Pigou-Dalton transfer principle} (\citet{sakai2010}), \enquote{very weak inequality aversion} (\citet{alcantud2012}), and \enquote{pairwise Hammond equity} (\citet{sakai2016}).
Pigou-Dalton transfer principle has also been discussed in the inter-generational equity literature as \enquote{strict transfer principle} by \citet{sakai2006}, \citet{bossert2007}, and \citet{hara2008}.}

The literature on inter-generational equity deals with improving the welfare of current and future generations extending over an infinite time horizon.
For this, we need to evaluate infinite utility streams consistently with social preferences which respect the desirable class of equity axiom(s).
In what follows, we will refer to the social preferences satisfying equity axiom(s) as ethical or equitable preferences.
In the case of procedural equity notion of anonymity, it is easy to construct an ethical social preference.
A social preference relation which evaluates all infinite utility streams as indifferent satisfies the anonymity axiom (in addition to Hammond equity) trivially. 
However such a preference relation is of limited practical significance as it does not distinguish between any pair of utility sequences.
Since anonymity and Hammond equity axioms fail to generate any strict preference, additional conditions in the nature of sensitivity (Pareto)%
\footnote{The Pareto axiom requires a stream of utilities to be ranked above another one if at least one  generation is better off and no one is worse off in the former compared to the latter.} 
and/or continuity axioms are imposed to obtain actionable social preferences.
In contrast, social preference relations satisfying strong equity or Pigou-Dalton transfer principle enable strict ranking of pairs of utility sequences.
Therefore, properties of such preferences without any additional sensitivity (Pareto) or continuity axioms are worthy of investigation.
Our goal is to propose a general framework for this study and report results of our inquiry.
We begin with some more historical background and a brief overview of the recent literature on the inter-generational equity and discuss in detail the scope and position of our contribution within the established literature.

\subsection{More on historical background and recent developments}

One of the classic papers on the study of Pigou-Dalton transfer principle is \citet{atkinson1970} which deals with the subject matter of inequality measures for finite population samples.
While analyzing commonly used inequality indices (variance, coefficient of variation or Gini index), Atkinson noted Dalton's views on the need for studying the underlying concept of social welfare.
Atkinson utilized the class of social welfare functions which are additively separable and symmetric functions of individual incomes ($U(y)$, where $y$ is the income of any individual) to understand their relationships with the inequality measures.
He showed that the necessary and sufficient conditions for the social welfare function to satisfy Pigou-Dalton transfer principle is that $U(y)$ is increasing and concave. 
\citet{dasgupta1973} refined the conclusions in \citet{atkinson1970} and showed that any symmetric social welfare function which is increasing and quasi-concave in individual incomes would satisfy the transfer principle.
In particular, for the social welfare function satisfying Pigou-Dalton transfer principle, additive separability is not an essential characteristic.
Their proof (\citet[Lemma 2, and Theorem 1]{dasgupta1973}) relied on an inequality established by Hardy, Littlewood and Polya.
\citet{fleurbaey2001} observed that the transfer principle has been criticized because of the costless transfer of income (or consumption) from the rich to the poor as not  a realistic condition (i.e., there is some loss of welfare in transmission from the rich to the poor).
They consider variations of the Pigou-Dalton transfer principle, namely proportional transfer principle and proportional ex-post transfer principle and show that the social welfare functions satisfying these equity conditions must exhibit forms of concavity stronger than log-concavity.
It is pertinent to note that all of the aforementioned research papers consider application of Pigou-Dalton transfer principle (and its variations) to a society with finite population only.

In general, the real-valued representation satisfying PD does not carry over from the finite time horizon (finitely many agents) economy to infinite time horizon (infinitely many agents) models.
To overcome this difficulty in consistent ranking of infinite utility streams, recourse is taken to the analysis of pairwise rankings via social welfare relations and social welfare orders whenever it is feasible to do so.
There is a rich body of research publications dealing with evaluations of infinite utility streams satisfying redistributive equity and other useful monotonicity and/or continuity axioms.
\citet[Section 5]{hara2008} have categorized the pool of possibility/impossibility results on consistent evaluations of infinite utility streams based on three criteria: (a) rationality (transitivity, negative transitivity, quasi-transitivity, acyclicity, completeness, etc.); (b) continuity (or upper semi-continuity, lower semi-continuity etc.);  and (c) sensitivity (variations of Pareto, procedural equity (AN), redistributive equity (PD, SE, HE).
The results described below could be clubbed together based on these three conditions.%
\footnote{We refer the interested reader to \citet{hara2008} for further details.}

\citet{sakai2003} extended the redistributive equity axioms to the setting of infinite utility streams (i.e., infinite population).
\citet{sakai2003} introduced two versions of redistributive equity condition, namely, distributive fairness semiconvexity and strong distributive fairness semiconvexity for pairwise ranking on the Banach space of all bounded sequences $X\equiv \ell^{\infty}_{+}:= \left\{\langle x_t\rangle_{t=1}^{\infty}: \sup_{t} |x_t|<\infty \right\}$.%
\footnote{Though the strong distributive fairness semiconvexity appears to be similar in spirit to the Pigou-Dalton transfer principle, it is distinct and more restrictive.
For illustration, consider $x=(1, 0, 0, 1, 0, 0, \cdots)$, $x(\pi)= (0, 1, 1, 0, 0, 0, \cdots)$ and $y = (0.7, 0.3, 0.4, 0.6, 0, 0, \cdots)$. 
Then there is no $s\in(0, 1)$ such that $y = sx+(1-s)x(\pi)$.
Hence, strong distributive fairness semiconvexity is not applicable whereas $y\succ x$ by Pigou-Dalton transfer principle and $y\succ x(\pi)$ applying anonymity and  Pigou-Dalton transfer principle.}
He proved the existence of an anonymous Paretian social welfare order satisfying strong distributive fairness semiconvexity (a positive result) and the non-existence of any anonymous social welfare relation satisfying distributive fairness semiconvexity and sup-norm continuity axioms (a negative result).
The negative result points to the difficulty one faces in seeking any (sup-norm) continuous anonymous social welfare order satisfying redistributive equity conditions.
\citet{sakai2006} considered Pigou-Dalton transfer principle in the infinite horizon setting (i.e., on the space of utility streams $X\equiv \ell^{\infty}$) and has  established that under very reasonable restrictions on the utility domain $X$, there does not exist any (lower or upper) continuous binary relation satisfying Pigou-Dalton transfer principle and quasi-transitivity.
\citet[Theorems 1 and 2]{sakai2006} clearly show the need for relaxing some of the axioms (quasi-transitivity and/or upper (or lower) continuity) to obtain social welfare preferences satisfying Pigou-Dalton transfer principle.
\citet[Theorems 1 and 2]{bossert2007} proved the existence of an anonymous Paretian social welfare order satisfying (a) Pigou-Dalton transfer principle (Theorem 1); and (b) strong equity (Theorem 2) on the space of infinite utility streams, $X=\RR^{\N}$.
\citet[Theorem 1]{hara2008} have proven that there does not exist any acyclic social evaluation relation satisfying Pigou-Dalton transfer principle and a weak version of continuity (with respect to sup-metric).
\citet[Theorem 2, Corollary 3]{sakai2010} established the non-existence of binary relations satisfying weak Pigou-Dalton transfer principle and versions of anonymity and transitivity.
\citet[Corollary 2]{sakai2016} considered pairwise Hammond equity axiom in combination with a set of four additional axioms (of efficiency, and anonymity) and proved that the unique real-valued representation of the ethical social welfare orders is the $\liminf$ function.

One can infer from the short description above that there is a need to dispense with some of the axioms in order to focus on social welfare relations which respect the Pigou-Dalton transfer principle.
Typical objective of such an exercise is to investigate the existence of ethical social welfare functions satisfying redistributive equity conditions (PD, SE or HE).
There are positive as well as negative results available in the literature. 
\citet{alcantud2012} proved the non-existence of social welfare functions satisfying Pigou-Dalton transfer principle and weak Pareto%
\footnote{The weak Pareto axiom requires a stream of utilities to be ranked above another one if every generation is better off in the former compared to the latter.}
for $X=\RR^{\N}$.
\citet[Proposition 5]{alcantud2010a} and \citet[Proposition 3]{sakamoto2012}  proved the following possibility result. 
There exists a social welfare function satisfying Pigou-Dalton transfer principle, anonymity and weak dominance (an efficiency condition) for $X=[0, 1]^{\N}$.
\citet{sakamoto2012} also proved other possibility and impossibility results dealing with variations of strong equity (AE-1 and AE-2).
\citet[Corollary 1]{alcantud2013a} proved the non-existence of any social welfare function satisfying strong equity and monotonicity for $X=[0, 1]^{\N}$.
\citet[Theorem 1]{alcantud2013} proved the non-existence of any social welfare function satisfying Hammond  equity and Pareto  for $X=Y^{\N}$, with $|Y|\geq 4$ and the existence of a social welfare function satisfying Hammond equity and weak Pareto (Theorem 2) when $Y=\N$.%
\footnote{\citet{dubey2015} extend their analysis and prove the following: There exists a social welfare function satisfying HE and weak Pareto if and only if the set $Y$ is well ordered.}
\citet{dubey2014} and \citet{dubey2016} also prove several positive and negative results on the existence of social welfare functions satisfying strong equity and Pigou-Dalton transfer principle, respectively.%
\footnote{See also \citet[Propositions 1 and 2]{alcantud2020}.}

This brief overview reveals that the following feature of the redistributive equity condition is present in most of the analysis (with the exception of weak Pigou-Dalton transfer principle of \citet{sakai2010} and pairwise Hammond equity of \citet{sakai2016}).
The definitions of the redistributive equity axioms can account for at most finitely many equity enhancing transfers.
This is in contrast to the efficiency condition of Pareto axiom which allow arbitrarily many welfare improvements. 
We believe the restrictive feature of the redistributive equity axioms need to be examined in order to enrich our knowledge of the ethical social welfare relations. 
The next sub-section describes the generalization proposed in this paper.

\subsection{Generalizing the distributional equity principles}

Careful scrutiny of the equity conditions mentioned earlier shows that in general the inequality reducing transfers are restricted to  finitely many generations in utility stream $x$ to get the preferred utility stream $y$. 
Two exceptions are \citet{sakai2010} and \citet{sakai2016}.
In the case of weak Pigou-Dalton transfer principle of \citet{sakai2010} and pairwise Hammond equity of \citet{sakai2016},  redistribution involving infinitely many generations are considered, but only of a specific type, which is to carry out a redistribution involving all generations and consisting of a welfare redistribution between a generation and the immediate subsequent one.

However, more general transfer mechanisms (for instance among an arbitrary infinite set of generations) have not been considered in the literature.
It is pertinent to note that the transitivity property of strict ranking need not carry over from finite to infinite case.
The following example describes this situation. 
Consider the utility stream 
\[
\begin{matrix}
x=&\langle&1,& 0,& 1,&1,& 0,& 1,& 1,& 0,&1,&1,&0,&\cdots\rangle\\
z=&\langle&0.75,& 0.25,& 1,&0.6,& 0.1,& 1,&1,& 0,&1,&1,&0,&\cdots\rangle,
\end{matrix}
\]
where (a) generations $2$, $5$, $8$, $\cdots$ in $x$ have utility level zero and the remaining generations have utility level one, and (b) $z$ is obtained from $x$ by carrying out transfers for the first two generations of rich and poor agents, and generation four (rich) and five (poor)  without changing the rest of the generations.
Strong equity enables us to prefer $z$ to $x$.
However, if we consider 
\[
\begin{matrix}
z^{\prime} = \langle0.75,& 0.25,& 1,&0.6,& 0.1,& 1,&0.6,& 0.1,&1,&0.6,&0.1,&1,&0.6,&0.1,&\cdots\rangle,
\end{matrix}
\]
neither strong equity nor weak Pigou-Dalton transfer principle are helpful in comparing $z^{\prime}$ (obtained by carrying out infinitely many inequality diminishing transfers from $(1, 0)$ to $(0.6, 0.1)$) and $z$, even though intuitively it makes sense to  prefer $z^{\prime}$ in comparison to $z$.

The main contribution in this paper is to introduce some generalized variants of strong equity and (weak or standard) Pigou-Dalton transfer principle in order to be able to consistently rank pairs of utility streams like $z^{\prime}$ and $z$ mentioned above.
The new equity concepts are termed as \emph{generalized equity} (GE) and \emph{generalized Pigou-Dalton transfer principle} (GPD).
We report new and interesting features of generalized equity in this comprehensive study. 
Our results show that the generalized equity principles behave rather differently from the standard strong equity and Pigou-Dalton transfer principle. 
Unlike the case for strong equity, it is not possible to find a binary relation satisfying generalized equity when $Y$ contains the interval $[0,1]$.
The existence of social welfare relations is determined by the \text{order type} of the set $Y$.
In Examples \ref{Ex1} and \ref{Ex2} we show the restriction on the structure of the utility domain $Y$ which should be imposed in order to obtain the existence of social welfare relations satisfying generalized equity (and monotonicity): 
Lemma \ref{L3} provides a sufficient condition and shows that when $Y$ is well-ordered then social welfare relations satisfying generalized equity and monotonicity exist. 
Since the characterizing feature is the order property of the set $Y$, this result is not sensitive to the numerical values of the elements of $Y$.%
\footnote{For example, $Y:=\left\{\frac{1}{n+2}, \frac{n}{n+1}:n\in\N\right\}$ and $Y^{\prime}:=\{-1, 1,-2, 2,\cdots\}$ are such that $Y(<)$ and $Y^{\prime}(<)$ have identical order type, i.e.,  $Y(<)$ is order isomorphic to the set of integers.
On the other hand, $\textbf{Y}:\left\{\frac{1}{2}-\frac{1}{n+1}, \frac{1}{2}+\frac{1}{n+1}:n\in\N\right\}$ is such that $\textbf{Y}(<)$ and $Y(<)$ are not order isomorphic to each other. 
Set $\textbf{Y}$ is equipped with well-defined minimum, $0$, and maximum, $1$, whereas $Y$ has neither.}
Moreover, Lemma \ref{L1} ensures that the existence result in Lemma \ref{L3} is invariant to monotone (increasing/decreasing) transformation of the elements in the domain set $Y$. As a consequence we can simply restrict our attention to utility domain $Y \subseteq [0,1]$. 

Beyond generalized equity (GE) we also introduce other two variants, in line with the literature on Pareto principles for infinite utility streams, which are called \emph{infinite equity} (IE) requiring the welfare redistribution to operate on an arbitrary infinite set of generations, and \emph{weak equity} (WE) requiring the redistribution to involve all generations, with the rich and poor generations being paired (linked) in any arbitrary manner (see precise definition in section \ref{s3}).   

Having established sufficient conditions  for the existence of social welfare relations, we next examine their real-valued representation via social welfare functions. 
In section \ref{s4}, Proposition \ref{P1} contains an example of a social welfare function satisfying generalized equity and monotonicity when $Y$ contains no more than five elements.
Proposition \ref{P2} then shows that $Y$ could be enlarged to no more than seven elements if we dispense with the efficiency condition.

The main results of the paper are Theorems \ref{T1}, \ref{T2}, \ref{T3}, \ref{T4} and \ref{T5}, which consist of five characterizations of real-valued representation via social welfare functions for various combinations of GE, IE, WE, together with monotonicity (M) and anonymity (AN).
Theorem \ref{T1} shows the impossibility of real-valued representation for infinite equity combined with anonymity as soon as  $Y$ contains four elements (the least number of elements needed to employ generalized equity).
This points out inherent conflict between representability, infinite equity and anonymity for all non-trivial domains $Y$. 
Theorem \ref{T1} also answers negatively to a natural question related to Propositions \ref{P1} and \ref{P2} about the existence of social welfare functions satisfying generalized equity and anonymity.%
\footnote{\citet{fleurbaey2003} have examined the difficulties in the explicit description of  social welfare orders satisfying anonymity and efficiency axioms. 
Their analysis is relevant in view of the non-existence of social welfare functions proven in Theorem \ref{T1}.}
Theorem \ref{T2} shows that there exists a social welfare function satisfying infinite equity and monotonicity if and only if $Y$ consists of at most five elements. 
Theorem \ref{T3} shows that there exists a social welfare function satisfying infinite equity if and only if $Y$ consists of at most seven elements.
In particular, Theorems \ref{T2} and \ref{T3} show that the number of utility levels in $Y$ occurring in Proposition \ref{P1} and Proposition \ref{P2} are optimal, since in both cases the possibility of real-valued representations does not extend to larger utility domains.

Due to this persistent conflict between infinite equity and representation, we consider the still weaker equity  
notion of weak equity.
As reported earlier in Example \ref{Ex1}, consistent pair-wise ranking of utility streams satisfying weak equity is not possible in case $Y$  is any arbitrary subset of real numbers.
In Theorem \ref{T4}, we show that social welfare relations satisfying weak equity, monotonicity (and anonymity) admit representation if and only if $Y$ is well-ordered.
In Theorem \ref{T5}, we show that social welfare relations satisfying weak equity (and anonymity) admit representation if and only if $Y$ does not contain a subset order isomorphic to the set of integers (both natural numbers and negative integers).
It is also worthy of mention (Remark \ref{R3}) that the characterization results in Theorems \ref{T4} and \ref{T5} show that the representation of weak equity (both when combined with monotonicity and alone) is not effected by anonymity. 
This underlines a substantial difference compared to the representability results obtained for infinite equity (and monotonicity) combined with anonymity, as proven in Theorems \ref{T1}, \ref{T2} and \ref{T3} above mentioned, where anonymity is shown to make impossible real-valued representation for any non-trivial utility domain.

The paper is organized as follows.
We introduce notation and basic definitions in section \ref{s2}.
Sections \ref{s3} and \ref{s4} form the core of the paper.
In section \ref{s3} we introduce the generalized equity principles extending strong equity and Pigou-Dalton principles, and then study conditions for the existence of the social welfare relations satisfying these generalized equity principles. 
Section \ref{s4} deals with the real-valued representation of the various versions of generalized equity.
We conclude in section \ref{s5}.
All proofs (except proof of Theorem \ref{T1}) are in the appendix.

\setcounter{theorem}{0}

\section{Preliminaries}\label{s2}
\subsection{Notation}\label{s21}
Let $\RR$, $\N$, $\NM$, $\Z$ be the sets of real numbers, natural numbers, negative integers and integers respectively. 
For all $y, z\,\in \RR^{\N}\,$, we write $y\geq z$ if $y_{n}\geq z_{n}$, for all $n\in \N$; we write $y>z$ if $y\geq z$ and $y\neq z;$ and we write $y\gg z$ if $y_{n}>z_{n}$ for all $n\in \N$.
For a set $S$, we denote its cardinality by $|S|$.
Recall that a subset of $\N$ is called \emph{co-finite} if it is the complement of a finite set. 

\subsection{Definitions}\label{s22}

\subsubsection{Pairing functions}\label{ss221}
A \emph{partial function} $f: X \rightarrow Y$ is a function from a subset $S$ of $X$ to $Y$. 
If $S$ equals $X$, the partial function is said to be \emph{total}.
Domain and range of function $f$ are denoted by $\dom(f)$ and $\ran(f)$ respectively.
We say that $\alpha: \N \rightarrow \N$ is a \emph{pairing function} if and only if $\alpha$ is a partial function satisfying:
\begin{itemize}
\item $\dom(\alpha)=\ran(\alpha)$
\item $\forall n \in \dom(\alpha)$,  $\alpha(\alpha(n))=n$.  
\end{itemize}
We denote the set of all pairing functions by $\Pi$.

\subsubsection{Ultrafilter on the set of natural numbers}\label{ss222}
A collection $\mathcal{F}$ consisting of subsets of $\N$ is called a \emph{filter} on $\N$ if and only if it satisfies the following properties:
\begin{itemize}
\item $\N \in \mathcal{F}$ and $\emptyset \notin \mathcal{F}$,
\item $\forall A$, $B \in \mathcal{F}$, $A \cap B \in \mathcal{F}$,
\item $\forall A \in \mathcal{F}$,  $\forall B \subseteq \N$ ($B \supseteq A \Rightarrow B \in \mathcal{F}$).
\end{itemize}
Moreover if $\mathcal{F}$ is maximal (i.e. for every filter $\mathcal{F}^{\prime} \supseteq \mathcal{F}$ we have $\mathcal{F}^{\prime}=\mathcal{F}$), then $\mathcal{F}$ is said to be an \emph{ultrafilter}.

\subsubsection{Order properties of subsets of real numbers} \label{ss223}
We will say that the set $S$ is \emph{strictly ordered} by a binary relation $\Re$ if $\Re$ is \emph{connected} (if $s$, $s^{\prime}\in S$ and $s\neq s^{\prime}$, then either $s\Re s^{\prime}$ or $s^{\prime}\Re s$ holds), \emph{transitive} (if $s$, $s^{\prime}$, $s^{\prime\prime}\in S$ and $s\Re s^{\prime}$ and $s^{\prime}\Re s^{\prime\prime}$ hold, then $s\Re s^{\prime\prime}$ holds) and \emph{irreflexive} ($s\Re s$ holds for no $s\in S$). 
In this case, the strictly ordered set will be denoted by $S(\Re)$.
For example, the set $\N$ is strictly ordered by the binary relation $<$ (where $<$ denotes the usual \enquote{less than} relation on the real numbers).
We will say that a strictly ordered set $S^{\prime}(\Re^{\prime})$ is \emph{order isomorphic} to the strictly ordered set $S(\Re)$ if there is a one-to-one function $f$ mapping $S$ onto $S^{\prime}$, such that for all $s_{1}$, $s_{2}\in S$, $s_{1}\Re s_{2}$ implies $f(s_{1})\Re^{\prime}f(s_{2})$. 
We next consider strictly ordered subsets of $\RR$. 
With non-empty $S\subset \RR$, by stardard notation, the strictly ordered set $S(<)$ is called:
\begin{enumerate}
\item[(i)]{of order type $\omega$ if $S(<)$ is order isomorphic to $\N(<)$;}
\item[(ii)]{of order type $\omega^{\ast}$ if $S(<)$ is order isomorphic to $\NM(<)$;}
\item[(iii)]{of order type $\sigma$ if $S(<)$ is order isomorphic to $\Z(<)$.}
\end{enumerate}
In order to characterize these types of strictly ordered sets, we need to define the concepts of a \emph{cut}, a \emph{first} element and a \emph{last} element of a strictly ordered set.
Given a strictly ordered set $S(<)$, we define a \emph{cut} $[S_{1}, S_{2}]$ of $S(<)$ as a partition of $S$ into two sets $S_{1}$ and $S_{2}$ (i.e., $S_{1}$ and $S_{2}$ are non-empty, $S_{1}\cup S_{2}=S$ and $S_{1}\cap S_{2}=\emptyset$) such that for each $s_{1}\in S_{1}$ and each $s_{2}\in S_{2},$ we have $s_{1}<s_{2}$.
An element $\underline{s}\in S$ ($\overline{s}\in S$) is called a first (last) element of $S(<)$ if $s<\underline{s}$ ($\overline{s}<s$) holds for no $s\in S$. 
A strictly ordered set $S(<)$ is said to be \emph{well-ordered} if each non-empty $A\subset S$ has a first element (namely $\min A$). 
It follows from the above definitions that if the strictly ordered sets $S(<)$ and $T(<)$ are order isomorphic, then $S(<)$ is well-ordered if and only if $T(<)$ is well-ordered.
A basic characterization result on well-ordered sets is as follows (see \citet[Theorem 1, p. 262]{sierpinski1965}).

\begin{result}\label{RE1}
\emph{Let $S$ be a non-empty subset of $\mathbb{R}$ and $S(<)$ a strictly ordered set. 
The necessary and sufficient condition for $S(<)$ to be a well-ordered set is that it should contain no subset of order type $\omega^{\ast}$; i.e., it should contain no infinite strictly decreasing sequence.}
\end{result}
\noindent The following result on the strictly ordered set $S(<)$ of type $\sigma$ can be found in \citet[p. 210]{sierpinski1965}.
\begin{result}\label{RE2}
\emph{A strictly ordered set $S(<)$ is of order type $\sigma$ if and only if the following two conditions hold:
\begin{enumerate}
\item[(i)]{$S$ has neither a first element nor a last element, or in other words neither $\min S$ nor $\max S$ exists.}
\item[(ii)]{For every cut $[S_{1},S_{2}]$ of $S$, the set $S_{1}$ has a last element and the set $S_{2}$ has a first element.}
\end{enumerate}}
\end{result}

\subsubsection{Social welfare relations}\label{ss224}
Let $Y$, a non-empty subset of $\RR$, be the set of all possible utilities that any generation can achieve. 
Then $X\equiv Y^{\N}$ is the set of all possible utility streams. 
If $\langle x_{n}\rangle \;\in \;X$, then $\langle x_{n}\rangle = \langle x_{1}, x_{2}, \cdots\rangle$, where, for all $n\in \N$, $x_{n}\in Y$ represents the amount of utility that the generation of period $n$ earns.
For $x\in X$, we use notation $x[n]:=\langle x_1, x_2, \cdots, x_n\rangle$ for the initial $n$ terms.
We consider binary relations on $X$, denoted by $\succsim $, with symmetric and asymmetric parts denoted by $\sim $ and $\succ $ respectively, defined in the usual way. 
A \emph{social welfare relation} (SWR) is a reflexive and transitive binary relation.
A \emph{social welfare order} (SWO) is a complete SWR.
A \emph{social welfare function} (SWF) is a mapping $W:X\rightarrow \RR$. 
Given a SWO $\succsim$ on $X$, we say that $\succsim $ can be \emph{represented} by a real-valued function if there is a mapping $W:X\rightarrow \RR$ such that for all $x, y\in X$, we have $x\succsim y$ if and only if $W(x)\geq W(y)$.

\subsubsection{Equity and efficiency axioms}\label{ss225}

In this paper, we study the generalization of the following consequentialist equity criteria.
\begin{definition}
\emph{(Pigou-Dalton transfer principle - PD): \quad If $x,y\in X$, and there exist $i,j\in \N$ and $\varepsilon>0$, such that $x_{j} - \varepsilon = y_{j} > y_{i} = x_{i}+\varepsilon$, while $y_{k} = x_{k}$ for all $k\in \N\setminus \{i,j\}$, then $y\succ x$.}
\end{definition}
\begin{definition}
\emph{(Strong equity - SE):\quad If $x,y\in X$, and there exist $i,j\in \N$, such that $x_{j} > y_{j} > y_{i} > x_{i}$ while $y_{k} = x_{k}$ for all $k\in \N\setminus \{i,j\}$, then $y\succ x$.}
\end{definition}

The procedural equity criterion that we will use is anonymity (also known as finite anonymity).
\begin{definition}
\emph{(Anonymity - AN):\quad If $x,y\;\in \;X$, and if there exist $i,j\;\in \;\N$ such that $x_{i} = y_{j}$ and $x_{j} = y_{i}$, and for every $k\in \N\setminus \{i,j\}$, $x_{k} = y_{k}$, then $x\sim y$.}
\end{definition}

The following efficiency principle is used throughout the paper.
\begin{definition} 
\emph{(Monotonicity - M):\; Given $x, y\in X$, if $x\geq y$, then $x \succsim y$.}
\end{definition}

\begin{remark}\label{R1}
\emph{The main purpose of this paper is to generalize the consequentialist equity principles such as PD and SE.
Our investigations reveal that there exist significant challenges in consistent evaluation of the social preferences satisfying the proposed generalizations of PD and SE. 
Therefore, when it is feasible to combine additional equity or efficiency conditions, we select the least demanding variants of procedural equity and efficiency axioms.
Monotonicity is a very weak efficiency condition. 
It only requires society to not prefer infinite utility stream $x$ over $y$ if $y_t>x_t$ for some $t\in\N$.
Anonymity (finite) permits ranking of infinite utility streams $x$ and $y$ when one is obtained by carrying out a finite permutation of the other.
There is no conflict in real-valued representation of SWRs satisfying AN and/or M.
As it turns out, significant conflicts already arise when we require real-valued representation of the social preferences satisfying the generalized versions of PD or SE with AN and/or M. Hence, combining our generalized versions of consequentialist equity principles with even stronger versions of efficiency (Pareto, weak Pareto) and/or anonymity would \emph{a fortiori} provide conflicts, in most cases.}  
\end{remark}

\section{Generalized equity and generalized Pigou-Dalton  transfer principles}\label{s3}

The main contribution of this paper in the introduction of a novel concept, named as \emph{generalized equity} (and the related \emph{generalized Pigou-Dalton transfer principle}).
To better understand the key reason for studying such extended versions of the consequentialist equity principles we start with the following observation.
Given a set of utilities $Y:= \{a, b, c, d, e\}\subset\RR$ ordered by $a<b<c<d<e$, consider the two infinite streams $x$ and $y$ such that 
\[
\begin{matrix}
x:=& \langle b,& c,& e,& b,& c,& e,& b,& c,& \cdots&\rangle,\\
y:=& \langle a,& d,& e,& a,& d,& e,& a,& d,& \cdots& \rangle.
\end{matrix}
\] 
Following the expected interpretation of a distributive equity principle, we should be able to always rank $y \prec x$. 
In the finite case, PD (and so SE) together with transitivity is sufficient to secure such a ranking, but in the infinite case transitivity cannot be extended to infinite chains. 
Hence PD and SE even with transitivity are not sufficient conditions to secure the desired ranking $y \prec x$, and so an extension is necessary. 
In this paper, we propose the following generalization of SE and PD axioms and study their characteristics.

\begin{definition}
\emph{(Generalized Equity - GE) Given $x, y \in X$ if there is $\alpha \in \Pi$ such that for every $i \in \dom(\alpha)$ one has either $y_i < x_i < x_{\alpha(i)} < y_{\alpha(i)}$ or $y_{\alpha(i)} < x_{\alpha(i)} < x_i < y_i$ and for all $k \notin \dom(\alpha)$, one has $x_{k} = y_{k}$, then $y \prec x$.}
\end{definition}

\begin{definition}
\emph{(Generalized Pigou-Dalton transfer principle - GPD) Given $x, y \in X$ if there is $\alpha \in \Pi$ such that for every $i \in \dom(\alpha)$ there exists $\varepsilon_i>0$ such that $\text{ either } y_i +\varepsilon_i = x_i < x_{\alpha(i)} = y_{\alpha(i)} -\varepsilon_i \text{ or } y_{\alpha(i)} +\varepsilon_i = x_{\alpha(i)} < x_i = y_i - \varepsilon_i$, and for all $k \notin \dom(\alpha)$, one has  $x_{k} = y_{k}$, then $y \prec x$.}
\end{definition}

Moreover, since we are considering an infinite time horizon, it could be reasonable to require the ranking induced by $\prec$ being sensitive not only to few changes, but to a number of changes as large as possible. 
For instance, in an infinite setting one could require that the number of individuals/generations linked via the pairing function (where we can appreciate a reduction of inequality) should be any arbitrary infinite set. 
This is in line with the weaker forms of Pareto principles extensively studied in the literature, such as infinite Pareto, asymptotic Pareto and weak Pareto.%
\footnote{See \citet{diamond1965}, \citet{basu2003}, \citet{fleurbaey2003}, \citet{zame2007}, \citet{crespo2009}, \citet{lauwers2010}, \citet{dubey2011},  \citet{petri2019}, \citet{dubey2020} and \citet{dubey2021} for versions of Pareto principles employed in inter-generational equity literature.}
The following definitions capture this idea for our study.

\begin{definition} 
\emph{(Infinite Equity - IE). Given $x$, $y \in X$ if there is $\alpha \in \Pi$ such that $|\dom(\alpha)|=\infty$ and for every $i \in \dom(\alpha)$ one has
either $y_i < x_i < x_{\alpha(i)} < y_{\alpha(i)}$ or $y_{\alpha(i)} < x_{\alpha(i)} < x_i < y_i$ and for all $k \notin \dom(\alpha)$ one has $x_k = y_k$, then $y \prec x$.}
\end{definition}

\begin{definition}
\emph{(Weak Equity - WE). Given $x$, $y \in X$ if there is $\alpha \in \Pi$ such that $\dom(\alpha)=\N$ and for every $i \in \dom(\alpha)$ one has
either $y_i < x_i < x_{\alpha(i)} < y_{\alpha(i)}$ or $y_{\alpha(i)} < x_{\alpha(i)} < x_i < y_i$, then $y \prec x$.}
\end{definition}

Note that the analogous notions of infinite and weak Pigou-Dalton transfer principle could be defined in similar manner.    
Moreover, we have a well-ordered ranking of these equity principles (similar to the ranking of Pareto principles) in the following sense:
\[
\gse \Rightarrow  \text{IE} \Rightarrow  \we \quad \text{and} \quad \gpd \Rightarrow  \text{IPD} \Rightarrow \wpd.
\] 
Also, given a certain type of equity principle (e.g. GE, IE, WE), the related Pigou-Dalton version is weaker, which means for instance $\gse \Rightarrow \gpd$ and so on.

\subsection{Social welfare relations satisfying generalized equity} \label{ss31}

\citet{bossert2007} have shown (when $Y=\RR$) that there exist SWOs on infinite utility streams which satisfy Hammond equity, anonymity and Pareto%
\footnote{Pareto axiom is defined as follows: For $x, y\in X$, if $x > y$ then $x\succ y$.} 
axioms.
Under Pareto, Hammond Equity and SE are equivalent. 
Therefore, we infer that  there exist SWOs on infinite utility streams satisfying SE, AN and M.
Similar result is also noted in \citet{lauwers2010}, who uses the notion of a filter in order to obtain an extended lexicographic order to infinite streams satisfying Pareto, AN, PD (and Hammond, SE) axioms.
Further, by using an ultrafilter one could extend such SWR to obtain SWO satisfying the same axioms.
Given these results, it is natural to ask if it is possible to use similar argument to obtain analogous results about the existence of SWRs satisfying GE and GPD. 
However, the situation in the case of GE and GPD is more subtle and the SWRs do not exist in general.
Therefore, some restrictions on the utility domain need to be imposed. The following examples explain this situation.
\begin{example} \label{Ex1}
\emph{There is no SWR satisfying GE on $\mathbb{Z}^\N$. 
Towards a contradiction, assume there exists $\preceq$ on $\Z^\N$ satisfying GE and consider the following streams:
\[
\begin{matrix}
x:=& \langle& 1,& 4,& -1,& -4,& 5,& 8,& -5,& -8,& \cdots,& 4k+1,& 4k+4,& -4k-1,& -4k-4,& \cdots \rangle \\
y:=& \langle& 2,& 3,& -2,& -3,& 6,& 7,& -6&, -7,& \cdots,& 4k+2,& 4k+3,& -4k-2,& -4k-3,& \cdots \rangle.
\end{matrix}
\]
Then let $\alpha$, $\beta \in \Pi$ be defined as follows:
\[
\alpha(n)= \left\{ 
\begin{array}{ll}
n+1 & \text{if $n$ is odd} \\
n-1 &  \text{if $n$ is even} \\
\end{array}
\right.
\quad
\beta(n)= \left\{ 
\begin{array}{ll}
3 & \text{if $n=1$} \\
1 & \text{if $n=3$} \\
n+3 & \text{if $n$ is even} \\
n-3 & \text{if $n$ is odd and $n \geq 5$}
\end{array}
\right.
\] 
A straightforward computation shows that $\alpha$ yields to $x \prec y$, whereas $\beta$ yields to $y \prec x$, providing us with a contradiction. 
Note that the construction actually shows that even WE is inconsistent on $\Z^\N$.}  
\end{example}

\begin{example} \label{Ex2}
\emph{Let $Y=\NM$.
Then GE and M are inconsistent on $Y^\N$.
Towards a contradiction, assume there exists $\preceq$ on $\mathbb{R}^\N$ satisfying GE and M; consider the following streams:
\[
\begin{matrix}
x:=& \langle& -2,& -5,& -6,& -9,& \cdots,& -4k-2,& -4k-5,& \cdots &\rangle \\
y:=& \langle& -3,& -4,& -7,& -8,& \cdots,& -4k-3,& -4k-4,& \cdots& \rangle \\
y^{\prime}:=& \langle& -1,& -4,& -7,& -8,& \cdots,& -4k-3,& -4k-4,& \cdots &\rangle.
\end{matrix}
\]
The sequences consists of elements from the set of negative integers. 
Since $y^{\prime}_1 > x_1>y_1$ and $y^{\prime}_t=y_t$ for all $t>1$, by M, $y \preceq y^{\prime}$.
Then let $\alpha, \beta \in \Pi$ be defined as follows:
\[
\alpha(n)= \left\{ 
\begin{array}{ll}
n+1 & \text{if $n$ is odd} \\
n-1 &  \text{if $n$ is even} \\
\end{array}
\right.
\quad
\beta(n)=  \left\{ 
\begin{array}{ll}
n+3 & \text{if $n$ is even} \\
n-3 & \text{if $n$ is odd and $n > 3$} \\
1&\text{if $n=3$}\\
3&\text{if $n=1$}.
\end{array}
\right.
\] 
An easy computation then shows that $\alpha$ yields $x \prec y$, whereas $\beta$ yields $y^{\prime} \prec x$, since we have noticed that $y \preceq y^{\prime}$, we thus obtain a contradiction.  
Note that the construction actually shows that even WE  and M are inconsistent on $Y^\N$.}
\end{example}

Both examples involve utility domains $Y$ having an infinite descending chain. 
Indeed if we consider utility domains $Y$ without infinite descending chains, possibility results can be established, as it is shown in Lemma \ref{L3}. 
Note that the same impossibility results occur when we replace GE (and WE) with GPD (and WPD).

\section{Representation of social welfare relations satisfying generalized equity} \label{s4}

Having remarked some restrictions and sufficient conditions to be imposed in order to have SWRs  satisfying combinations of GE, AN and M, we can now move to investigate results about representation via SWFs. 
First of all, we present two possibility results for real-valued representation satisfying GE.

\begin{proposition}\label{P1}
Let $Y\equiv \{a, b, c, d, e\}\subset\RR$ be such that $a<b<c<d<e$ and let $X\equiv Y^{\N}$.
For any sequence $x\in X$, define:%
\begin{equation*}
N(x) = \{n:n\in \N\;\text{and}\;x_{n}=a\},\;\text{and}\;M(x) = \{m:m\in \N\;\text{and}\;x_{m}=b\}.
\end{equation*}%
Define the social welfare function $W:X\rightarrow \RR$ by 
\begin{equation}\label{P1E1}
W(x)=%
\begin{cases}
-\underset{n\in N(x)}{\sum}\frac{1}{2^{n}} - \underset{m\in M(x)}{\sum}\frac{1}{3^{n}} & \;\text{if}\;N(x)\;\text{or}\;M(x)\;\text{is non-empty} \\ 
\overset{\infty}{\underset{n=1}{\sum}}\frac{x_{n}}{2^{n}} & \;\text{otherwise}.%
\end{cases}
\end{equation}%
Then $W(x)$ satisfies GE and M.%
\end{proposition}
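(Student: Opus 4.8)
The plan is to verify the two axioms separately, exploiting the fact that $W$ is defined by a two-regime formula keyed on whether the "low" symbols $a,b$ occur. First I would record the elementary facts about the ranges of the two pieces. When $N(x)\cup M(x)\neq\emptyset$ the value $W(x)$ is negative (it is a sum of negative terms $-2^{-n}$ and $-3^{-n}$ over the relevant indices), while when $N(x)=M(x)=\emptyset$ the stream uses only $\{c,d,e\}$, all positive since we may assume $c>0$ (by Lemma \ref{L1} we may rescale $Y\subseteq[0,1]$, or simply note the argument only uses the order and the sign pattern), so $W(x)=\sum x_n 2^{-n}>0$. Hence any stream with no $a$ or $b$ entry is strictly above any stream having some $a$ or $b$ entry; moreover on the "no $a,b$" regime $W$ is just a strictly increasing function of the stream coordinatewise, which immediately gives M within that regime. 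I would also note the injectivity-type fact that on the "some $a,b$" regime the map $x\mapsto W(x)$ depends only on the pair $(N(x),M(x))$, and distinct such pairs give distinct values, because the dyadic part $\sum_{n\in N(x)}2^{-n}$ and the triadic part $\sum_{m\in M(x)}3^{-n}$ live in disjoint scales and each determines its index set.

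For monotonicity: suppose $x\ge y$. If $x$ and $y$ lie in the same regime, M follows from the two observations above (coordinatewise monotonicity of $\sum x_n2^{-n}$; and, on the low regime, $x\ge y$ forces $N(x)\subseteq N(y)$ and, on the set where they agree about being $a$, $M(x)\subseteq M(y)$ — more carefully, $x\ge y$ entrywise means at each coordinate $x_n$ is "higher" in the order $a<b<c<d<e$, so $y_n=a\Rightarrow x_n\in\{a,\dots,e\}$ but wherever $x_n=a$ we must have $y_n=a$, giving $N(x)\subseteq N(y)$; similarly any coordinate with $x_n=b$ has $y_n\in\{a,b\}$, and one checks $-\sum_{N(x)}2^{-n}-\sum_{M(x)}3^{-n}\ge -\sum_{N(y)}2^{-n}-\sum_{M(y)}3^{-n}$ by splitting the comparison coordinate-by-coordinate). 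If $x$ is in the high regime and $y$ in the low regime, then $W(x)>0>W(y)$; the reverse cross-case ($x$ low, $y$ high) cannot happen under $x\ge y$, since $y$ having all entries in $\{c,d,e\}$ and $x\ge y$ forces all entries of $x$ in $\{c,d,e\}$ as well. So in every case $W(x)\ge W(y)$.

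For GE: let $\alpha\in\Pi$ witness $y\prec x$, so that for each $i\in\dom(\alpha)$ the four values $y_i,x_i,x_{\alpha(i)},y_{\alpha(i)}$ are strictly ordered in one of the two prescribed ways, and $x_k=y_k$ off $\dom(\alpha)$. I would analyze which symbols can play these roles. In each linked pair $\{i,\alpha(i)\}$ we pass from $(y_i,y_{\alpha(i)})$ to $(x_i,x_{\alpha(i)})$ by strictly shrinking a gap without crossing, so exactly one endpoint strictly increases and the other strictly decreases. The key case distinction is whether the low symbols $a,b$ appear. If neither $x$ nor $y$ ever takes the value $a$ or $b$, then both are in the high regime, all moves are among $\{c,d,e\}$, and on each linked pair the net effect on $\sum z_n 2^{-n}$ is to add $2^{-i}(x_i-y_i)+2^{-\alpha(i)}(x_{\alpha(i)}-y_{\alpha(i)})$; but with only three symbols $c<d<e$ the only non-crossing inequality-reducing pattern is $(c,e)\to(d,d)$ (in some order of the pair), and there the increase happens at the coordinate where the value was $c$ and the decrease by the same amount where it was $e$, so the signed contribution is $(e-d)\bigl(2^{-(\text{coord of the }c)}-2^{-(\text{coord of the }e)}\bigr)$ — wait, this is not automatically positive, so here I must be more careful: actually in the three-element high regime one shows GE simply cannot be triggered nontrivially in a way that decreases $W$, OR one shows the relevant streams never fall purely in the high regime; I expect the intended argument is that whenever GE applies, at least one of the two streams has an $a$ or a $b$ entry. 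The main work, then, is the low-regime bookkeeping: I claim GE forces $N(x)\subsetneq N(y)$ together with a controlled change in $M$, or more precisely that the pair $(N(x),M(x))$ is obtained from $(N(y),M(y))$ by operations that provably increase $W$. Concretely, each linked pair either (1) raises a coordinate out of value $a$ while lowering another coordinate — the lowered coordinate cannot reach $a$ (no crossing) so it becomes $b,c,$ or $d$; or (2) symmetric with $b$. Tracking the resulting change $\Delta=-\sum_{N(x)}2^{-n}-\sum_{M(x)}3^{-n}+\sum_{N(y)}2^{-n}+\sum_{M(y)}3^{-n}$ pair by pair, one sees removing an index from $N$ contributes $+2^{-n}$, adding an index to $M$ contributes $-3^{-n}$, and $2^{-n}>3^{-n}$ dominates any finite or countable such exchange at comparable or larger scales; the delicate point is that the index being removed from $N$ and the index being added to $M$ within one linked pair are different coordinates, so I need the estimate $2^{-i}-3^{-j}$ type terms to sum to something positive — this is the crux and is handled by noting that in pattern (1) the pair is $\{i,\alpha(i)\}$ with, say, $y_i=a$ becoming $x_i\in\{b,c,d\}$ and $y_{\alpha(i)}>y_i$ becoming $x_{\alpha(i)}$ one notch lower, so the only way an index enters $M(x)$ is if $x_{\alpha(i)}=b$, which requires $y_{\alpha(i)}\in\{c,d,e\}\setminus\{b\}$... and in that subcase $\alpha(i)\notin M(y)$, so we gain $2^{-i}$ and lose $3^{-\alpha(i)}$, and since also $i\in N(y)\setminus N(x)$ there is no compensating loss — summing over all pairs, $\Delta\ge\sum 2^{-i}-\sum 3^{-\alpha(i)}>0$ because the $i$'s are distinct and the $\alpha(i)$'s are distinct and $\sum_k 2^{-k}-\sum_k 3^{-k}$ over any matching of distinct indices is positive (worst case $\sum_{k\ge1}3^{-k}=1/2=\sum_{k\ge1}2^{-k}$, with strict inequality unless both sums are over all of $\N$ with the $2$-sum also over all of $\N$, a degenerate configuration one checks directly).

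The step I expect to be the main obstacle is exactly this sign-of-$\Delta$ computation in the low regime: one must carefully enumerate, for each of the two non-crossing patterns and each placement of $a$/$b$ among the four values of a linked pair, precisely which indices leave $N$, enter $N$, leave $M$, enter $M$, and then show the aggregate telescoping sum is strictly positive using $3^{-n}<2^{-n}$ and the distinctness of the indices involved; the cross-regime comparisons and M are comparatively routine once the range observations are in place. I would organize the proof as: (i) range and dependence lemmas for $W$; (ii) M; (iii) GE via the pairwise-$\Delta$ accounting, with the positivity estimate as the final lemma.
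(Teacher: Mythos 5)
Your GE step fails at exactly the point you flag as the crux, and it fails for a reason that also reveals the missing key observation. In a GE-linked pair the four values $y_i<x_i<x_{\alpha(i)}<y_{\alpha(i)}$ are \emph{distinct} elements of the five-element set $Y$, so the poor coordinate of the dispreferred stream satisfies $y_i\in\{a,b\}$, the raised value satisfies $x_i\in\{b,c\}$, and the lowered (rich) coordinate satisfies $x_{\alpha(i)}\geq c$ (it has two values strictly below it) and $y_{\alpha(i)}\geq d$. Consequently an index can never enter $N$ or $M$ at a lowered coordinate: your case ``$x_{\alpha(i)}=b$'' is vacuous, and so is the unresolved ``high regime'' worry about $(c,e)\to(d,d)$, which is not a GE pair at all since GE demands strict inequalities among all four values (in particular the dispreferred stream always has an $a$ or $b$ at every poor coordinate, hence is evaluated by the first branch of (\ref{P1E1})). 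Once this is noted, every change of membership in $N,M$ happens at a single coordinate: $i$ leaves $N(y)$ and either vanishes from both sets ($x_i=c$, gain $2^{-i}$) or moves to $M(x)$ ($x_i=b$, gain $2^{-i}-3^{-i}$), or $i$ leaves $M(y)$ ($y_i=b$, $x_i=c$, gain $3^{-i}$), while rich and off-domain coordinates contribute nothing. So $W(x)-W(y)$ is a termwise strictly positive sum --- this single observation (``at a poor coordinate the value is $a$ or $b$'') is essentially the paper's whole proof, and no cross-index comparison is ever needed.

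Because you miss that observation, you instead try to close the argument with an estimate that is false: it is not true that a gain of $\sum 2^{-i}$ dominates a loss of $\sum 3^{-\alpha(i)}$ for an arbitrary matching of distinct indices (a single pair with $i=5$, $\alpha(i)=1$ gives $2^{-5}-3^{-1}<0$), and the worst-case arithmetic backing it is also wrong ($\sum_{k\geq 1}2^{-k}=1$, not $1/2$). So as written the GE verification is not a proof, even though the troublesome configurations happen to be impossible. Your treatment of M and of the cross-regime comparison is essentially fine; do note, though, that positivity of the second branch needs $c>0$ (i.e.\ the normalization $Y\subseteq[0,1]$ the paper uses implicitly), and Lemma \ref{L1} only yields existence of \emph{some} SWF on a general $Y\subset\RR$, not that the displayed $W$ itself works there.
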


\begin{proposition}\label{P2}
Let $Y=\{a, b, c, d, e, f, g\}\subset\RR$ where $a<b<c<d<e<f<g$.
For every $x\in X$, define,
\begin{equation*}
N(x)=\{n: n\in \N, x_{n}=a\; \text{or}\; x_n=g\}\;\text{and}\; M(x)=\{n: n\in \N, x_{n}=b\; \text{or}\; x_n=f\}.
\end{equation*}
Define the social welfare function $W:X\rightarrow \RR$ by:
\begin{equation} \label{P2E1}
W(x)=
\begin{cases}
-\underset{n\in N(x)}{\sum}\frac{1}{2^{n}} - \underset{n\in M(x)}{\sum}\frac{1}{3^{n}}&\; \text{if}\; N(x)\; \text {or}\; M(x) \;\text{is non-empty},\\
0& \; \text{otherwise}.
\end{cases}
\end{equation}
Then $W$ satisfies GE.
\end{proposition}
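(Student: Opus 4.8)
The plan is to rewrite $W$ by a single formula covering both branches of \eqref{P2E1}, and then turn the verification of GE into a sign computation for an absolutely convergent series.

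First I would set $w(v,n):=\tfrac{1}{2^{n}}$ for $v\in\{a,g\}$, $\ w(v,n):=\tfrac{1}{3^{n}}$ for $v\in\{b,f\}$, and $w(v,n):=0$ for $v\in\{c,d,e\}$, and put $\Phi(x):=\sum_{n\in\N}w(x_{n},n)$. Then $W(x)=-\Phi(x)$ for all $x\in X$: if $N(x)$ or $M(x)$ is non-empty this is merely a rewriting (recall $N(x)=\{n:x_{n}\in\{a,g\}\}$, $M(x)=\{n:x_{n}\in\{b,f\}\}$), and if both are empty then $\Phi(x)=0$ and the second branch of \eqref{P2E1} also gives $0$. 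Since $0\le\Phi(x)\le\sum_{n}\bigl(\tfrac{1}{2^{n}}+\tfrac{1}{3^{n}}\bigr)=\tfrac{3}{2}$, the function $W$ is well defined, and it represents the complete relation $\succsim$ given by $x\succsim y\iff W(x)\ge W(y)$. Hence it suffices to prove: if $\alpha\in\Pi$ witnesses the hypothesis of GE for a pair $x,y\in X$, then $W(x)>W(y)$, i.e.\ $\Phi(y)-\Phi(x)>0$.

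Next I would exploit the structure of $\alpha$. A fixed point of $\alpha$ is incompatible with the GE-hypothesis (it would force $y_{i}<x_{i}<x_{i}<y_{i}$), so $\dom(\alpha)$ decomposes into two-element blocks $\{i,\alpha(i)\}$; moreover $\dom(\alpha)\ne\varnothing$, since otherwise $x=y$. As $x_{k}=y_{k}$ for $k\notin\dom(\alpha)$, those coordinates cancel and
\[
\Phi(y)-\Phi(x)=\sum_{\{i,j\}}\Bigl[\bigl(w(y_{i},i)-w(x_{i},i)\bigr)+\bigl(w(y_{j},j)-w(x_{j},j)\bigr)\Bigr],
\]
the sum ranging over the blocks, with $j=\alpha(i)$ and, after relabelling the two indices of a block if needed, $y_{i}<x_{i}<x_{j}<y_{j}$. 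The heart of the argument is that every bracket is strictly positive. At the poorer index $i$: since $x_{i}$ lies strictly above $y_{i}$ and strictly below the two distinct elements $x_{j}<y_{j}$ of $Y$, one has $y_{i}\in\{a,b,c,d\}$ and $x_{i}\in\{b,c,d,e\}$; running through these few possibilities — the only borderline one being $y_{i}=a$, $x_{i}=b$, where $\tfrac{1}{2^{i}}>\tfrac{1}{3^{i}}$ is used — gives $w(y_{i},i)-w(x_{i},i)\ge0$, with equality precisely when $y_{i}\in\{c,d\}$. Symmetrically, $w(y_{j},j)-w(x_{j},j)\ge0$, with equality precisely when $y_{j}\in\{d,e\}$. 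But $y_{i}\in\{c,d\}$ and $y_{j}\in\{d,e\}$ cannot both hold, because $y_{i}<x_{i}<x_{j}<y_{j}$ needs four strictly increasing elements of $Y$, whereas each admissible pair of endpoints drawn from $\{c,d\}$ and $\{d,e\}$ leaves at most one element of $Y$ strictly in between. Hence in each block at least one of the two differences is strictly positive, so each bracket is $>0$.

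It remains only to collect the estimates. The displayed series is absolutely convergent — being dominated termwise by $\sum_{n}\tfrac{1}{2^{n}}=1$ — and it is a sum of strictly positive terms with at least one summand, since $\dom(\alpha)\ne\varnothing$. Therefore $\Phi(y)-\Phi(x)>0$, that is $W(x)>W(y)$, which means $y\prec x$ in the order represented by $W$; this is exactly what GE demands. The one genuinely delicate step is the block estimate — above all the observation that the poor-end and rich-end equality configurations cannot occur simultaneously in the same block, which is precisely what upgrades $\ge$ to $>$; the convergence of the series and the passage from ``each term positive'' to ``the whole (possibly infinite) sum positive'' are then immediate.
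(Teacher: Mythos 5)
Your proof is correct and follows essentially the same route as the paper's: for each pair $\{i,\alpha(i)\}$ you compare the contributions of the poor and the rich coordinate, show each difference is nonnegative with equality only when the corresponding value of the more spread-out stream lies in $\{c,d\}$ (resp.\ $\{d,e\}$), and exclude simultaneous equality because $y_i<x_i<x_{\alpha(i)}<y_{\alpha(i)}$ needs two elements of $Y$ strictly in between --- which is exactly the paper's contradiction ``$y_{n}\geq d \geq y_{\alpha(n)}$'' in its labelling. Your additional bookkeeping (writing $W=-\Phi$ with explicit weights, splitting $\dom(\alpha)$ into two-element blocks, and invoking absolute convergence to regroup the series) only makes explicit steps the paper leaves implicit.
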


The representation in Proposition \ref{P1} shows that the monotone SWF reported in \citet[Proposition 1]{dubey2014} could easily be shown to satisfy GE, while the representation in Proposition \ref{P2} is based on SWF satisfying SE for $Y$ consisting of seven distinct elements from an unpublished version of \citet{dubey2014}.
These two results yield the following two natural questions:
\begin{question} \label{Q1}
\emph{What is the role of AN? 
The SWFs defined in Proposition \ref{P1} and \ref{P2} do not satisfy AN since they are discounted sums. 
Hence, it is natural to ask whether there exists a more general SWF satisfying AN as well, or if there are some deeper structural limits in combining GE/GPD with AN.}
\end{question}
\begin{question} \label{Q2}
\emph{In both proofs, the cardinality of the utility domain $Y$ seems to be crucial, and indeed the proofs do not work when we add more elements to the utility domains. 
Is it possible to improve either of the two constructions or are there any structural limits?}
\end{question}

In the next sections we tackle and provide answers to these questions. 
We will deal with GE, IE and WE, but in most cases analogous results can be proven with the same arguments for their corresponding generalized Pigou-Dalton principles (GPD, IPD, WPD) as well.   
This is due to an \emph{invariance} result, which is the content of Lemma \ref{L1} below. 
In words, the lemma states that any representability result is invariant with respect to monotone transformations of the utility domain $Y$.

\begin{lemma}\label{L1}
Let $Y$ be a non-empty subset of $\RR$, $X\equiv Y^{\N}$.
Let $\preccurlyeq$ be a social welfare relation and $W:X\rightarrow \RR$ be a social welfare function satisfying generalized equity and anonymity.
Suppose $f$ is a monotone (increasing or decreasing) function from $I$ to $Y$, where $I$ is a non-empty subset of $\RR$.
Then, there exist a social welfare relation  $\preccurlyeq_{J}$ and a social welfare function  $V:J\rightarrow \RR$ satisfying generalized equity and anonymity, where $J=I^{\N}$.
\end{lemma}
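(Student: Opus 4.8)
The plan is to pull back the whole structure along the coordinatewise extension of $f$. I read the hypothesis ``$f$ monotone (increasing or decreasing)'' as \emph{strictly} monotone --- this is the version actually needed and the one relevant when passing between order-isomorphic utility domains --- so $f$ is injective, and its surjectivity onto $Y$ is never used. Setting $J:=I^{\N}$, I would define $F\colon J\to X$ by $F(z):=\langle f(z_{1}),f(z_{2}),\dots\rangle$ and then put
\[
z\preccurlyeq_{J}z'\ :\Leftrightarrow\ F(z)\preccurlyeq F(z'),\qquad V:=W\circ F\colon J\to\RR .
\]
Here $\preccurlyeq_{J}$ and $V$ are built from the same $F$ but are otherwise independent, since the lemma does not presume that $W$ represents $\preccurlyeq$. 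Reflexivity and transitivity of $\preccurlyeq_{J}$ are immediate from those of $\preccurlyeq$ and $V$ is plainly a map $J\to\RR$, so $\preccurlyeq_{J}$ is a SWR and $V$ a SWF, with $z\prec_{J}z'\Leftrightarrow F(z)\prec F(z')$ and $z\sim_{J}z'\Leftrightarrow F(z)\sim F(z')$; and if in addition $\preccurlyeq$ is a SWO represented by $W$ then $\preccurlyeq_{J}$ is a SWO represented by $V$, which is the form used in the applications. It then remains to check that $\preccurlyeq_{J}$ and $V$ inherit GE and AN.

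The heart of the argument is that a GE-configuration in $J$ is carried by $F$ to a GE-configuration in $X$. Suppose $z,z'\in J$ and $\alpha\in\Pi$ witness the hypothesis of GE, with $z'$ in the role of ``$y$'' and $z$ in the role of ``$x$''. Fixing $i\in\dom(\alpha)$, one has either $z'_{i}<z_{i}<z_{\alpha(i)}<z'_{\alpha(i)}$ or $z'_{\alpha(i)}<z_{\alpha(i)}<z_{i}<z'_{i}$. If $f$ is increasing, applying $f$ transports such a $4$-term chain verbatim to the corresponding chain for $F(z),F(z')$ at $i$. If $f$ is decreasing, applying $f$ reverses the chain, so that e.g.\ $z'_{i}<z_{i}<z_{\alpha(i)}<z'_{\alpha(i)}$ becomes $F(z')_{\alpha(i)}<F(z)_{\alpha(i)}<F(z)_{i}<F(z')_{i}$ --- that is, the \emph{other} of the two disjuncts of the GE clause, now read at index $i$ --- and symmetrically for the second disjunct. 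Since $\alpha$ is an involution, the requirement ``for every $i\in\dom(\alpha)$, the first or the second disjunct holds'' is unaffected by this swap, so the same $\alpha$ still works; strict monotonicity of $f$ is exactly what keeps every inequality strict. For $k\notin\dom(\alpha)$, $z_{k}=z'_{k}$ trivially gives $F(z)_{k}=f(z_{k})=f(z'_{k})=F(z')_{k}$. Hence the same $\alpha$ witnesses the hypothesis of GE for the pair $(F(z),F(z'))$ in $X=Y^{\N}$, and since $\preccurlyeq$ satisfies GE we get $F(z')\prec F(z)$, i.e.\ $z'\prec_{J}z$, while since $W$ satisfies GE we get $W(F(z'))<W(F(z))$, i.e.\ $V(z')<V(z)$. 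Thus $\preccurlyeq_{J}$ and $V$ satisfy GE.

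Anonymity will transfer in the same spirit but more easily: if indices $i,j$ witness the hypothesis of AN for $z,z'\in J$, then $F(z)_{i}=f(z_{i})=f(z'_{j})=F(z')_{j}$, $F(z)_{j}=F(z')_{i}$ and $F(z)_{k}=F(z')_{k}$ for $k\notin\{i,j\}$, so $i,j$ witness AN for $(F(z),F(z'))$, whence $F(z)\sim F(z')$ and $W(F(z))=W(F(z'))$, that is $z\sim_{J}z'$ and $V(z)=V(z')$. Finally, the constraints $|\dom(\alpha)|=\infty$ (resp.\ $\dom(\alpha)=\N$) are untouched by reusing the same $\alpha$ for $(F(z),F(z'))$, so the identical argument yields the invariance of IE and WE; and because satisfying GE, IE, WE entails satisfying GPD, IPD, WPD respectively, the produced $\preccurlyeq_{J}$ and $V$ automatically witness the Pigou-Dalton versions in every situation where the original relation satisfies the stronger equity principle. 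I expect the only delicate point to be the decreasing case above --- checking that reversing a $4$-term chain interchanges the two disjuncts of the GE clause while leaving the conclusion $y\prec x$ intact; everything else is bookkeeping that uses nothing beyond injectivity and strict monotonicity of $f$.
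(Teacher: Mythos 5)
Your proposal is correct and follows essentially the same route as the paper: pull the relation and the function back along the coordinatewise extension of $f$, note that an increasing $f$ preserves each GE four-term chain while a decreasing $f$ reverses it into the other disjunct of the GE clause (with the same pairing function $\alpha$), and transfer AN by the same coordinate-swap observation. Your reading of ``monotone'' as strictly monotone is exactly what the paper's displayed inequalities implicitly use, so there is no gap.
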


It is important to note that the invariance result holds for only monotone increasing transformations when we combine generalized equity with monotonicity axiom.

\subsection{Infinite equity and anonymity} 

In this section we examine Question \ref{Q1}. 
It turns out that the non-existence of a SWF satisfying our equity principles together with AN occurs for all non-trivial domains even when GE is weakened to IE. 
This demonstrates inherent conflict among real-valued representation, IE and AN.

\begin{theorem} \label{T1}
Let $X=Y^{\N}$. 
Then there does not exist any social welfare function satisfying infinite equity and anonymity on $X$ if $|Y|\geq 4$.
\end{theorem}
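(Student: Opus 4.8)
The plan is to derive a contradiction from the existence of a social welfare function $W$ satisfying IE and AN on $X = Y^{\N}$ when $|Y| \geq 4$. By Lemma \ref{L1} (invariance under monotone transformations of the utility domain), it suffices to treat the case $Y = \{0,1,2,3\}$, since any four-element subset of $\RR$ is order isomorphic to this one, and if $|Y| > 4$ we may restrict to a four-element subset (a SWF on the larger domain restricts to one on $Y^{\N}$ for the subdomain). So I would fix $Y = \{0,1,2,3\}$ and assume $W : \{0,1,2,3\}^{\N} \to \RR$ represents a SWO satisfying IE and AN.

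The core idea is to construct, for each real $r$ in some uncountable (or at least suitably rich) index set, a utility stream $z_r$ built from infinitely many ``blocks'' of the pattern $(1,2)$ versus $(0,3)$, arranged so that $W$ must be strictly monotone along a chain of such streams. Concretely: start from the stream $x = \langle 0,3,0,3,0,3,\dots\rangle$ and $w = \langle 1,2,1,2,1,2,\dots\rangle$; on each block $\{2k-1,2k\}$ one has $x_{2k-1} < w_{2k-1} < w_{2k} < x_{2k}$, so the pairing function $\alpha$ swapping $2k-1 \leftrightarrow 2k$ on all blocks witnesses $x \prec w$ via IE, hence $W(x) < W(w)$. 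The real work is to interpose: for a subset $A \subseteq \N$ of blocks, let $z_A$ agree with $w$ (i.e.\ have the block $(1,2)$) on blocks in $A$ and agree with $x$ (block $(0,3)$) on blocks not in $A$. Whenever $A \subsetneq B$ with $B \setminus A$ infinite, IE applied to the infinitely many blocks in $B\setminus A$ (each giving the inequality $0 < 1 < 2 < 3$) yields $z_A \prec z_B$, so $W(z_A) < W(z_B)$. Now using AN together with IE one engineers two infinite sets $A, B$ with $A \setminus B$ and $B \setminus A$ both infinite, such that a single pairing function witnesses $z_A \prec z_B$ \emph{and} another witnesses $z_B \prec z_A$ — this is exactly the mechanism of Examples \ref{Ex1} and \ref{Ex2}, where finite anonymity is used to ``shift'' the pairing and reverse the inequality. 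This directly contradicts $W$ being a function, even before invoking representability; but to get the sharper statement one instead builds an order-embedding of an uncountable linear order into $(\RR, \le)$ via $W$, which is impossible.

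Let me sketch the cleaner route, which I expect is the intended one. Using AN freely, I would show that along the family $\{z_A : A \subseteq \N \text{ coinfinite and infinite}\}$, the relation ``$A \subseteq^* B$'' (mod finite) forces $W(z_A) \le W(z_B)$, while genuine infinite containment forces strict inequality, and anonymity makes $W(z_A)$ depend only on which blocks carry $(1,2)$ up to finite modification. Then pick a family $\{A_r : r \in \RR\}$ of infinite-coinfinite subsets of $\N$ that is strictly increasing mod finite and dense (an ``$(\omega_1,\omega_1)$-gap''-style or simply a tower-like family suffices to get uncountably many strict steps): each $r < s$ gives $A_r \subsetneq^* A_s$ with infinite difference, hence $W(z_{A_r}) < W(z_{A_s})$, embedding an uncountable dense linear order order-preservingly into $\RR$ — which is fine for density alone, so the contradiction must instead come from a \emph{reversal}: arrange the $A_r$ so that AN also produces $z_{A_s} \prec z_{A_r}$ for the \emph{same} pair, as in the two Examples. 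That reversal, combined with transitivity of $\succsim$, collapses the chain and contradicts strictness.

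The main obstacle, and where I would spend the most care, is constructing the explicit pair of pairing functions $\alpha, \beta \in \Pi$ (in the spirit of the $\alpha,\beta$ appearing in Examples \ref{Ex1} and \ref{Ex2}) that simultaneously witness $z_A \prec z_B$ and $z_B \prec z_A$ after an appropriate finite permutation justified by AN — i.e.\ verifying that on every paired block the strict four-term chain $y_i < x_i < x_{\alpha(i)} < y_{\alpha(i)}$ (or its mirror) genuinely holds with the $\{0,1,2,3\}$ values, for \emph{both} directions, once anonymity is used to relabel a finite initial segment. The bookkeeping is that each block must present one ``rich–poor'' pair $(3,0)$ matched against a ``less unequal'' pair $(2,1)$, and the shift-by-one-block trick (sending block $k$ to block $k+1$, as $\beta(n) = n+3$ on the evens in the examples) is what flips which stream plays the role of the more-equal one; I need $|Y| \ge 4$ precisely so that within each block both $(0,3)$ and $(1,2)$ are available with $0 < 1 < 2 < 3$. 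Once that combinatorial core is pinned down, the rest — reduction to $|Y|=4$ via Lemma \ref{L1}, and deducing that no real-valued $W$ can respect a cyclically-ordered strict preference — is routine.
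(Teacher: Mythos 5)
Your central mechanism cannot work. You propose to use AN together with IE to produce, for a suitable pair of streams $z_A,z_B$ over $Y=\{0,1,2,3\}$, both $z_A\prec z_B$ and $z_B\prec z_A$ (``the mechanism of Examples \ref{Ex1} and \ref{Ex2}''), and you make this reversal the engine of the whole argument. But such a reversal is provably unavailable on a four-element (indeed any well-ordered) domain: Lemma \ref{L3} constructs a social welfare relation --- even an order --- satisfying GE (hence IE), AN and M on $Y^\N$ for well-ordered $Y$, so no finite chain of IE/AN/transitivity deductions can yield a strict cycle; the shift trick of Examples \ref{Ex1} and \ref{Ex2} relies essentially on an infinite descending chain of utility values in $Y$ ($\Z$ or $\NM$), which is exactly what a four-element $Y$ lacks. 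Concretely, with blocks $(0,3)$ versus $(1,2)$, a pairing function witnessing $z_B\prec z_A$ would have to cover every coordinate where the two streams differ, yet on the blocks in $B\setminus A$ the required strict nesting runs in the opposite direction, and the finite permutation permitted by AN cannot repair infinitely many such blocks. Moreover, since $|Y|=4\leq 5$, Proposition \ref{P1} already provides a SWF satisfying GE and M, so any proof of Theorem \ref{T1} must use AN and real-valuedness together in an essential, cardinality-type way, not through a preference cycle.

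What is missing is precisely that countability argument. The paper's proof fixes an enumeration of the rationals in $(0,1)$ and, for every real $r\in(0,1)$, builds a pair of streams with $x(r)\prec y(r)$ by IE, where $y(r)$ equalizes the blocks indexed by a sequence of rationals strictly decreasing to $r$. For $r<s$ only finitely many of those blocks correspond to rationals $\geq s$; AN is used to permute these finitely many blocks away, and IE applied to the infinitely many blocks indexed by rationals in $[r,s)$ then gives $y(r)\prec x(s)$. Hence the open intervals $\bigl(W(x(r)),W(y(r))\bigr)$, $r\in(0,1)$, are nonempty and pairwise disjoint, and each contains a rational, giving an injection of $(0,1)$ into $\QQ$ --- the contradiction. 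Your $z_A$-family along a $\subseteq^*$-increasing chain only yields a strictly increasing map into $\RR$, which, as you yourself observe, is no contradiction; the step you would need --- and do not supply --- is this AN-based construction of uncountably many disjoint intervals, not a reversal.
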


\begin{proof}
We establish the claim by contradiction.
Let there be a SWF $W: X\rightarrow \RR$ satisfying IE and AN.
Let $q_1$, $q_2$, $\cdots$ be an enumeration of all the rational numbers in $(0, 1)$.
We keep this sequence fixed for the entire proof.
Let $r\in (0, 1)$.
We construct a binary sequence $x(r)$ relying on the enumeration of rational numbers as follows.
Define $L(r)$ as follows.
Let $l_1= \min \left\{ n\in \N: q_n \in (0, r)\right\}$.
Having defined $l_1$, $l_2$, $\cdots$, $l_k$, we get 
\[
l_{k+1} = \min \left\{ n\in \N \setminus \{l_1, l_2, \cdots, l_k\}: q_n \in (0, r)\right\}.
\]
Thus 
\[
L(r) := \{l_1(r), l_2(r), \cdots, l_k(r), \cdots\}, \; \text{with}\; l_1(r)<l_2(r)<\cdots.
\]
Further, $U(r)$ denotes the set $\N\setminus L(r)$ and 
\[
U(r) := \{u_1(r), u_2(r), \cdots, u_k(r), \cdots\}, \; \text{with}\; u_1(r)<u_2(r)< \cdots <\cdots.
\]
\noindent The utility streams  $\langle x(r)\rangle$, and $\langle y(r)\rangle$ are defined as follows.%
\begin{equation}\label{03}
x_t(r) = \left\{ 
\begin{array}{ll}
b & \text{for } t =2n-1, n \in L(r) \\
c & \text{for } t =2n, n \in L(r) \\
a & \text{for } t =2n-1,  n \in U(r) \\
d & \text{for } t =2n, n \in U(r).
\end{array}
\right.
\end{equation}
Next, we choose a strictly decreasing sequence of rational numbers $\langle q_{\textbf{n}_k}(r)\rangle\in (r, 1)$ from the (fixed) enumeration which is convergent to $r$, and consider the sub-sequence $\left\{\textbf{n}_k: k\in \N\right\}$.
We change elements of the sequence $\langle x(r)\rangle$ at the coordinates $(2\textbf{n}_k-1, 2\textbf{n}_k)$, from $a$ and $d$  to $b$ and $c$, respectively so as to obtain the utility stream $\langle y(r)\rangle$:
\begin{equation}\label{04}
y_t(r) = \left\{ 
\begin{array}{ll}
b & \text{for } t =2\textbf{n}_k-1, k\in\N \\
c & \text{for } t =2\textbf{n}_k, k\in\N\\
x_t(r) & \text{otherwise.}
\end{array}
\right.
\end{equation}
We claim that $x(r) \prec y(r)$. 
Note that for each $k\in \N$, 
\[
x_{2\textbf{n}_k-1}(r) = a < b = y_{2\textbf{n}_k-1}(r) < c = y_{2\textbf{n}_k}(r) < d = x_{2\textbf{n}_k}(r).
\]
Hence, by IE we have $x(r)\prec y(r)$, and therefore, 
\begin{equation}\label{1PE3}
W(x(r))<W(y(r)).
\end{equation}
Next, we pick $s\in (r, 1)$ and define the utility stream $\langle x(s)\rangle$ using the same construction as in (\ref{03}).
We will show that $y(r) \prec x(s)$. 
Note that for all but finitely many $\textbf{n}_k$, we have $q_{\textbf{n}_k}<s$.
Pick $\textbf{n}_1 < \textbf{n}_2 < \cdots< \textbf{n}_K$, ($K$ finite) such that $q_{\textbf{n}_1}\geq s$, $\cdots$, $q_{\textbf{n}_K}\geq s$.
We have for all $k\leq K$, 
\[
x_{2\textbf{n}_k-1}(s) = a < b = y_{2\textbf{n}_k-1}(r) < c = y_{2\textbf{n}_k}(r) < d = x_{2\textbf{n}_k}(s).
\]
However, these are only finitely many such pairs of coordinates.
There exist infinitely many $l_m(s) \in \N \setminus \{\textbf{n}_1, \textbf{n}_2, \cdots, \textbf{n}_K\}$, with $l_m(s)>\textbf{n}_K$, such that for all $k \geq K$, $l_{m}(s) \neq \textbf{n}_k$ and $q_{l_m(s)}\in [r, q_{\textbf{n}_K})$.
For each of these $l_m(s)$,
\[
y_{2l_m(s)-1}(r) = a < b = x_{2l_m(s)-1}(s) < c = x_{2l_m(s)}(s) < d = y_{2l_m(s)}(r).
\]
We interchange the pairs of coordinates  $(2\textbf{n}_1-1, 2\textbf{n}_1)$, $\cdots$, $(2\textbf{n}_K-1, 2\textbf{n}_K)$ of $\langle y(r)\rangle$ (having value $b$ and $c$ at odd and even locations) with equally many elements (having $a$ and $d$ at odd and even locations) from the coordinates $(2l_m(s)-1, 2l_m(s))$ so as to obtain utility stream $\langle y^{\prime}\rangle$. 
Applying AN, $y(r)\sim y^{\prime}$ and therefore,
\begin{equation}\label{1PE4}
W(y(r)) = W(y^{\prime}).
\end{equation}
Comparing $\langle y^{\prime}\rangle$ to $\langle x(s)\rangle$, observe that for all $n\in U(s)$, $y^{\prime}_{2n-1} = x_{2n-1}(s)=a$, $y^{\prime}_{2n} = x_{2n}(s)=d$; for all $n\in L(r)$, $y^{\prime}_{2n-1} = x_{2n-1}(s)=b$, $y^{\prime}_{2n} = x_{2n}(s)=c$; for all $\textbf{n}_k$, with $k>K$ $y^{\prime}_{2n-1} = x_{2n-1}(s)=b$, $y^{\prime}_{2n} = x_{2n}(s)=c$.
Further, 
\[
y^{\prime}_{2v_{k^{\prime}}-1} = a < b = x_{2v_{k^{\prime}}-1}(s) < c = x_{2v_{k^{\prime}}}(s) < d = y^{\prime}_{2v_{k^{\prime}}}
\]
for every $v_{k^{\prime}}\in (U(r)\cap L(s)) \setminus \{\textbf{n}_K, \textbf{n}_{K+1}, \cdots\}$.
Since  there are infinitely many $v_{k^{\prime}}$, by IE $y^{\prime}\prec x(s)$, and consequently,
\begin{equation}\label{1PE5}
W(y^{\prime}) <W(x(s)).
\end{equation}
From (\ref{1PE4}) and (\ref{1PE5}), we get
\begin{equation}\label{1PE6}
W(y(r)) <W(x(s)).
\end{equation}

Therefore, observe that, given $r<s$, $(W(x(r)), W(y(r))$ and $(W(x(s)), W(y(s))$ are non-empty and disjoint open intervals.    
Hence, by density of $\mathbb{Q}$ in $\RR$ we conclude that we have found a one-to-one mapping from $(0, 1)$ to $\mathbb{Q}$, which is impossible as the latter set is countable.
\end{proof}

\subsection{Infinite equity on larger utility domains}

In this section, we answer Question \ref{Q2}.
The following result shows that Proposition \ref{P1} cannot be improved in order to require $Y$ consisting of six utility values, or more.

\begin{theorem} \label{T2}
The following are equivalent:
\begin{enumerate}
\item[(i)] There exists a social welfare function $W: Y^\N \rightarrow \mathbb{R}$ satisfying GE and M.
\item[(ii)] There exists a social welfare function $W: Y^\N \rightarrow \mathbb{R}$ satisfying IE and M.
\item[(iii)] $Y$ contains at most 5 elements.
\end{enumerate}
\end{theorem}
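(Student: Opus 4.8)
The plan is to prove the cycle of implications (i) $\Rightarrow$ (ii) $\Rightarrow$ (iii) $\Rightarrow$ (i). The implication (i) $\Rightarrow$ (ii) is immediate from the well-ordered ranking $\gse \Rightarrow \text{IE}$ noted right after the definitions: any SWF satisfying GE automatically satisfies the weaker IE (and M is untouched). The implication (iii) $\Rightarrow$ (i) is essentially Proposition \ref{P1}: if $|Y| \le 5$, list $Y = \{a,b,c,d,e\}$ with $a<b<c<d<e$ (padding with dummy values if $|Y|<5$, or simply observing the construction only uses the two smallest elements $a,b$) and invoke the SWF $W$ constructed there, which satisfies GE and M. By Lemma \ref{L1} (and the remark afterwards that for the M-version one needs monotone \emph{increasing} transformations), the construction is insensitive to the particular real values, so it applies to any five-element $Y$. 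So the real content is (ii) $\Rightarrow$ (iii): assuming $|Y| \ge 6$, I must derive a contradiction from the existence of an IE- and M-representing $W$.

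For (ii) $\Rightarrow$ (iii), the strategy mirrors the proof of Theorem \ref{T1}, but trading anonymity for monotonicity via the extra utility levels. Fix six values $a<b<c<d<e<f$ in $Y$. The idea is again to build, for each $r \in (0,1)$, a pair of streams $x(r) \prec y(r)$ using the rational-enumeration machinery of Theorem \ref{T1}, so that $W(x(r)) < W(y(r))$, and then to show that for $r<s$ these open intervals $(W(x(r)),W(y(r)))$ are pairwise disjoint — forcing an injection of $(0,1)$ into $\mathbb{Q}$, a contradiction. In Theorem \ref{T1} the disjointness came from sandwiching $y(r) \prec x(s)$ through an intermediate stream $y'$ obtained by a finite permutation (using AN). Here I would instead build the intermediate stream using M: with six levels available, the two "extra" values $e,f$ (above the four $a,b,c,d$ used in the IE-blocks) give room to raise finitely many coordinates — converting a block contributing to one stream into a dominating block — so that $y(r) \preceq y' \preceq x(s)$ follows from monotonicity on the finitely many changed coordinates while IE handles the infinitely many. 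Concretely, the four-element alphabet $\{a,b,c,d\}$ drives the IE-transfers exactly as in \eqref{03}–\eqref{04}, and $e,f$ are reserved to perform the monotone "bridging" at finitely many places where the $r$-construction and the $s$-construction disagree.

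The main obstacle is the bridging step: in Theorem \ref{T1} the passage from $y(r)$ to $x(s)$ exploited that a finite permutation is both an equality ($\sim$) under AN and leaves the IE-structure intact, so one could freely relocate finitely many "good" blocks. Under M one only gets a one-directional inequality from coordinatewise domination, so I must arrange the finitely many discrepancies between the $y(r)$-stream and the $x(s)$-stream to all point the same way (i.e. the $y(r)$-values are pointwise $\le$ the intermediate stream's values, which are pointwise $\le$ the $x(s)$-values), and this is exactly where the two spare levels $e,f$ and a careful choice of which coordinates carry $a,b,c,d$ versus $e,f$ must be orchestrated. I would need to check that six levels suffice for this (consistent with the theorem being sharp at $5$) and that no fewer do — the boundary case, showing five levels genuinely cannot support the bridging, is implicitly witnessed by the fact that Proposition \ref{P1}'s construction \emph{does} work at five, so the count is tight. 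Once the chain $W(x(r)) < W(y(r)) \le W(y') \le W(x(s)) < W(y(s))$ is in place for all $r<s$, the density-of-$\mathbb{Q}$ cardinality argument closes the proof verbatim as in Theorem \ref{T1}.
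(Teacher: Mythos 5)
Your handling of (i)$\Rightarrow$(ii) and (iii)$\Rightarrow$(i) is correct and matches the paper. The gap is in (ii)$\Rightarrow$(iii), which you only sketch, and the specific plan you sketch breaks down at the bridging step you yourself flag. If you keep the streams of Theorem \ref{T1} (blocks carrying $(a,d)$ or $(b,c)$ as in (\ref{03})--(\ref{04})) and try to replace the anonymity step by a monotone bridge, you must deal with the finitely many blocks $\textbf{n}_k$ with $q_{\textbf{n}_k}\geq s$, where $y(r)$ has $(b,c)$ while $x(s)$ has $(a,d)$. At the odd coordinate of such a block $y(r)=b>a=x(s)$, and $a$ is the \emph{minimum} of $Y$: no stream $y'\geq y(r)$ can lie below $x(s)$ there, and no IE pairing in the direction $y'\prec x(s)$ can absorb that coordinate either, since that would require either $y'<a$ at that coordinate or a partner coordinate where $x(s)$ takes a value below $a$. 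Raising coordinates with the spare levels $e,f$ only makes this worse, so monotonicity cannot play the role of the finite permutation in this design; the one-directional nature of M versus the symmetry of AN is exactly the obstruction, and your proposal leaves it unresolved ("I would need to check that six levels suffice").

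The paper resolves it by redesigning the streams rather than patching finitely many places. There the IE comparison $x(r)\prec y(r)$ is carried out on a fixed infinite set $\mathbf{I}$, independent of $r$, using the four upper levels $c<d<e<f$, while the rational-indexed blocks carry only $a$ (on $\textbf{U}(r)$) or $b$ (on $\textbf{L}(r)$). Then for $r<s$ every discrepancy between $y(r)$ and $x(s)$ points the same way: $y(r)\prec x(s)$ is obtained from an IE pairing that links $\mathbf{I}\cap\odd$ (values $d$ versus $c$) with the blocks for rationals in $[r,s)$ (values $a$ versus $b$), an unbounded pairing in the sense of Remark \ref{R2}, while M absorbs the remaining --- in fact infinitely many, not finitely many --- upward discrepancies ($e<f$ on $\mathbf{I}\cap\even$, $a<b$ on unpaired blocks). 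Your plan keeps bounded, within-block pairings and expects M to repair finitely many coordinates, which is precisely the configuration that cannot work; so the core implication (ii)$\Rightarrow$(iii) is missing.
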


Next result shows that we cannot improve Proposition \ref{P2} in order to require $Y$ consisting of eight utility values or more. 

\begin{theorem}\label{T3}
The following are equivalent:
\begin{enumerate}
\item[(i)] There exists a social welfare function $W: Y^\N \rightarrow \mathbb{R}$ satisfying GE.
\item[(ii)] There exists a social welfare function $W: Y^\N \rightarrow \mathbb{R}$ satisfying IE.
\item[(iii)] $Y$ contains at most 7 elements.
\end{enumerate}
\end{theorem}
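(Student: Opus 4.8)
\textbf{Proof proposal for Theorem \ref{T3}.}

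The plan is to mirror the structure of Theorem \ref{T2}, only with the bookkeeping adjusted for seven utility values instead of five. The implication (iii) $\Rightarrow$ (i) is essentially Proposition \ref{P2} (the SWF displayed in \eqref{P2E1} works for $|Y| = 7$, and for fewer elements we restrict the domain or pad it out, invoking Lemma \ref{L1} to reduce to any convenient seven-element subset of $[0,1]$). The implication (i) $\Rightarrow$ (ii) is immediate from the well-ordered chain of principles $\gse \Rightarrow \text{IE}$ noted in section \ref{s3}: any $W$ representing a relation that satisfies GE a fortiori satisfies IE, since IE only fires on a subclass of the pairs $(x,y)$ on which GE fires. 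So the substance is the contrapositive of (ii) $\Rightarrow$ (iii): if $|Y| \geq 8$, then no $W : Y^\N \to \RR$ can satisfy IE.

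For that direction I would fix eight values $y_1 < y_2 < \cdots < y_8$ in $Y$ and run the same ``uncountably many disjoint intervals'' argument as in the proof of Theorem \ref{T1}, but now \emph{without} anonymity — which is exactly why we need more room in $Y$. The idea is to build, for each $r$ in some uncountable index set (e.g. $(0,1)$, or a perfect set of reals coded by branches through $2^{<\omega}$), a pair $x(r) \prec y(r)$ via IE, and to show that for $r \neq r'$ one can pass from $y(r)$ to $x(r')$ (or between the two configurations) by a further IE step in the appropriate direction, forcing the open intervals $(W(x(r)), W(y(r)))$ to be pairwise disjoint. With anonymity unavailable, the ``transport'' step that re-labels coordinates has to be simulated by genuine inequality-reducing transfers along the pairing function; accommodating the intermediate values consumed by these transfers is what pushes the requirement from four symbols (Theorem \ref{T1}) up to eight. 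Concretely I expect the eight values to split into an ``outer pair'' $(y_1, y_8)$ and three ``inner'' layers, arranged so that a single configuration can be nudged toward either of two incomparable targets by moving mass strictly inward on infinitely many coordinate-pairs; a tree/fusion argument over $2^{<\omega}$ then yields continuum-many such configurations whose associated intervals are nested-or-disjoint in a way incompatible with $\RR$ being separable.

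The main obstacle is the combinatorial design of the eight-symbol gadget: one has to exhibit, for every pair $r \neq r'$, an explicit $\alpha \in \Pi$ with infinite domain witnessing the relevant strict preference between the configurations attached to $r$ and $r'$, while simultaneously guaranteeing that \emph{no} reverse witness exists that would collapse the construction — and the four remaining ``spacer'' values (beyond the four already needed in Theorem \ref{T1}) must be just enough, and placed just so, to make all the required transfers strictly inequality-reducing and non-rank-switching. I would first nail down the case $|Y| = 8$ by hand, checking that seven values genuinely do not suffice (this is where Proposition \ref{P2}'s construction must be shown to be tight: any attempt to extend \eqref{P2E1} to an eighth value creates an IE-cycle), and only then phrase the general $|Y| \geq 8$ statement by restricting to an eight-element subset. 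The verification that the constructed pairs really are IE-comparable, and the density argument producing the injection of an uncountable set into $\QQ$, are then routine, parallel to the end of the proof of Theorem \ref{T1}.
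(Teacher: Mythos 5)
Your skeleton matches the paper's: (iii) $\Rightarrow$ (i) is Proposition \ref{P2} (with Lemma \ref{L1} to normalize the domain), (i) $\Rightarrow$ (ii) is immediate, and the substance is the contrapositive of (ii) $\Rightarrow$ (iii) via an uncountable family of pairwise disjoint intervals $(W(x(r)),W(y(r)))$, $r\in(0,1)$. But the part you defer as ``the main obstacle'' --- the explicit eight-symbol gadget and the infinite-domain pairing functions witnessing both $x(r)\prec y(r)$ and $y(r)\prec x(s)$ for $r<s$ --- is precisely the content of the paper's proof, and your sketch of it does not yet amount to an argument. The paper's construction is concrete: with $a<b<c<d<e<f<g<h$, it partitions $\N$ into a set $\mathbf{I}$ (independent of $r$) and factorial-indexed blocks $\mathbf{L}(r),\mathbf{U}(r)$ coded by the rationals below/above $r$; $x(r)$ carries $(c,f)$ on $\mathbf{I}$ (odd/even coordinates), $(a,h)$ on $\mathbf{U}(r)$ and $(b,g)$ on $\mathbf{L}(r)$, while $y(r)$ replaces $(c,f)$ by $(d,e)$ on $\mathbf{I}$. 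Then $x(r)\prec y(r)$ by IE on $\mathbf{I}$, and for $s>r$ the transport step $y(r)\prec x(s)$ is obtained \emph{without} AN or M by two unbounded pairings: $\mathbf{I}\cap\odd$ with $\mathbf{L}(s)\cap\mathbf{U}(r)\cap\odd$ using the ladder $a<b<c<d$, and $\mathbf{I}\cap\even$ with $\mathbf{L}(s)\cap\mathbf{U}(r)\cap\even$ using $e<f<g<h$; both intersections are infinite because $[r,s)$ contains infinitely many rationals. So the eight values are organized as two parity-separated quadruples (one replacing the role that M played in Theorem \ref{T2}), not as an ``outer pair plus three inner layers,'' and no tree/fusion argument over $2^{<\omega}$ is needed --- the rational enumeration already indexes the family.

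Two of your auxiliary remarks point in a wrong direction and would not repair the gap. First, ``showing Proposition \ref{P2} is tight'' by checking that the displayed formula cannot absorb an eighth value proves nothing: (ii) $\Rightarrow$ (iii) requires that \emph{no} SWF exists, not that one candidate fails. Second, there is no ``IE-cycle'' to be found when $|Y|=8$: a finite $Y$ is well-ordered, so by Lemma \ref{L3} SWRs (indeed SWOs) satisfying GE, AN and M exist on $Y^{\N}$; the obstruction at eight elements is purely one of real-valued representability, so any correct proof must be a counting/separability argument of the kind above, and there is likewise no need to verify that ``no reverse witness exists'' --- the strict inequalities $W(x(r))<W(y(r))<W(x(s))$ already force the disjointness that contradicts the countability of $\QQ$.
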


\begin{remark} \label{R2}
\emph{It is interesting to notice distinct characteristics of the pairing functions involved in the proof of Theorem \ref{T1} from those involved in the proofs of Theorems \ref{T2} and \ref{T3} (see appendix). 
The pairing functions used in Theorem \ref{T1} have very simple combinatorial properties, as they link two consecutive  generations of rich and poor. 
On the contrary, the pairing functions $\alpha$ ($\beta$) used in Theorems \ref{T2} and \ref{T3} are not necessarily \emph{bounded} in the following sense: for every $N \in \N$ there exists $n \in \N$ such that $|\alpha(n) -n| > N$ ($|\beta(n) -n| > N$).}
\end{remark}

\subsection{Social welfare relations satisfying weak equity}
Having observed the restrictive nature of domain $Y$ for the existence of SWF satisfying IE, it makes sense to weaken the equity concept even further to check if there exists a SWF on a larger domain. 
We consider the weakest notion, i.e., WE here.
Following lemma shows that there exists a SWR satisfying GE, AN and M when $Y(<)$ is a well-ordered set of real numbers.%
\footnote{In case $Y$ is not a well-ordered set, Example \ref{Ex2} shows the non-existence of SWR satisfying $\we$ and M.} 
\begin{lemma}\label{L3}
Let $Y \subseteq [0,1]$ be well-ordered. Then there exists a social welfare relation satisfying generalized equity, anonymity and monotonicity on $Y^\N$. 
Moreover, using the axiom of choice, one can obtain a social welfare order satisfying generalized equity, anonymity and monotonicity on $Y^\N$.
\end{lemma}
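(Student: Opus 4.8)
The plan is to build the social welfare relation $\preccurlyeq$ by exploiting the well-ordering of $Y$ directly: since $Y \subseteq [0,1]$ is well-ordered, each $y \in Y$ has a well-defined ordinal rank $\rho(y) = \textsf{ot}(\{z \in Y : z < y\})$, and the key point is that a GE-transfer from $x$ to $y$ strictly \emph{decreases} some coordinate in ordinal terms and strictly increases another, but in a \emph{controlled} way: at each paired coordinate the poorer generation moves up and the richer moves down, with $y_i < x_i < x_{\alpha(i)} < y_{\alpha(i)}$ (or the symmetric arrangement). First I would observe that, because $Y$ is well-ordered, there is no infinite strictly descending sequence (Result \ref{RE1}), so the set of values that are ever \emph{decreased} when passing from $x$ to $y$ via a GE-transfer can be assigned ordinal weights that dominate the values that are increased; the natural device is to compare, for each $x$, the \emph{multiset of coordinate values below a threshold}, ordered lexicographically by ordinal rank from the bottom of $Y$ upward.

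Concretely, I would define $\preccurlyeq$ as follows. For $x \in X$ and an ordinal $\xi < \textsf{ot}(Y)$, let $A_\xi(x) = \{n \in \N : \rho(x_n) \le \xi\}$, the set of generations whose utility has rank at most $\xi$. Declare $x \prec y$ if there is a least $\xi$ for which $A_\xi(x) \ne A_\xi(y)$ and $A_\xi(y) \subsetneq A_\xi(x)$ in a suitable sense — more precisely one should compare the sets $A_\xi(x)$ and $A_\xi(y)$ not by inclusion but via a fixed one-to-one map $\N \to \N$ together with a decision rule, exactly as \citet{lauwers2010} does with a free ultrafilter to get a lexicographic-type extension. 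The point is that a single GE-transfer (even one touching infinitely many coordinates, as in the full GE axiom with an arbitrary pairing $\alpha$) strictly \emph{shrinks} $A_\xi(x)$ at the smallest rank $\xi$ it affects: every paired pair $(i,\alpha(i))$ has its poorer member strictly increased in rank and its richer member strictly decreased, but the \emph{smallest} rank appearing among all the $x$-values at paired coordinates is a value that gets vacated (the poorest paired generation strictly improves), while no coordinate of $y$ attains a rank strictly below that smallest $x$-rank among paired coordinates. This is where well-orderedness is essential: it guarantees such a smallest rank exists. Then $A_\xi(y) \subsetneq A_\xi(x)$ at that least $\xi$, giving $y \prec x$. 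Monotonicity follows because $x \ge y$ pointwise implies $\rho(x_n) \ge \rho(y_n)$ for all $n$, hence $A_\xi(y) \supseteq A_\xi(x)$ for every $\xi$, so we never get $x \prec y$; anonymity is immediate since the $A_\xi$'s are defined via a comparison rule that a finite permutation cannot change the eventual verdict of (using that the comparison device is permutation-insensitive, as in the ultrafilter construction).

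The hard part will be making the comparison rule on the sets $A_\xi(x), A_\xi(y) \subseteq \N$ simultaneously (a) transitive, (b) invariant under finite permutations, and (c) genuinely capturing all of GE — including pairings $\alpha$ with $|\dom(\alpha)| = \N$ and with $\alpha$ unbounded (recall Remark \ref{R2}). I expect to handle transitivity and the SWR (reflexive + transitive) part by defining $\prec$ through a fixed linear order on $\mathcal{P}(\N)/\textsf{fin}$ or via a free ultrafilter $\mathcal{U}$ on $\N$: set $x \preccurlyeq y$ iff at the least relevant rank $\xi$, $A_\xi(x) \triangle A_\xi(y)$ is handled by the $\mathcal{U}$-majority rule — this is transitive because it is essentially a lexicographic product of $\{0,1\}$-valued ultrafilter quotients indexed by the well-ordered set of ranks. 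Then the final clause of the lemma — extending to a complete SWO — follows by the standard argument: any SWR extends (Szpilrajn / Zorn) to a total preorder preserving all strict comparisons, and one checks GE, AN, M are preserved; this is where the axiom of choice enters, exactly as noted for the ultrafilter. I would present the $[0,1]$ restriction as harmless by invoking Lemma \ref{L1} (monotone transformations preserve everything, and for the M-version the relevant transformation can be taken increasing), so the lemma as stated for $Y \subseteq [0,1]$ transfers to any well-ordered $Y \subseteq \RR$ — though since Result \ref{RE1} already characterizes well-orderedness intrinsically by the order type, one can simply work with the ordinal rank function $\rho$ throughout and never mention $[0,1]$ in the core argument.
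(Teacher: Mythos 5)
Your combinatorial core is sound, and it is in fact the same observation the paper's proof exploits: by well-orderedness the minimum $h=\min\{x_n : n\in\dom(\alpha)\}$ over the paired coordinates exists, and at the rank of $h$ the improved stream has strictly fewer coordinates of that rank and none of any smaller rank, so a lexicographic-in-rank comparison would register the transfer at its first point of disagreement (modulo a harmless direction slip in how you state the decision rule). The genuine gap is the aggregation device that is supposed to turn the per-rank comparison of the infinite sets $A_\xi(\cdot)\subseteq\N$ into a transitive, finite-permutation-invariant relation. The concrete rule you name --- decide $A_\xi(x)$ versus $A_\xi(y)$ by a free-ultrafilter majority on $A_\xi(x)\triangle A_\xi(y)$, ``as in Lauwers'' --- cannot work: a free ultrafilter contains no finite set, so this rule is blind to finite differences. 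But GE does not require $\dom(\alpha)$ to be infinite (pairing functions include a single pair, so GE subsumes SE), and even for infinite pairings the vacated set at the critical rank, namely $\{i\in\dom(\alpha): x_i=h\}$, may be finite or a singleton. In exactly these cases your per-rank rule declares the two streams equivalent at the critical rank, the lexicographic comparison passes to higher ranks, and there the verdict can go either way; the strict preference demanded by GE is therefore not delivered. Any per-rank comparison that, like the ultrafilter-majority rule, identifies sets with finite symmetric difference fails for the same reason. What is needed is a comparison that respects strict inclusion even when the difference is a single element, while remaining transitive and invariant under finite permutations (but not under arbitrary finite modifications, or AN would again destroy GE); supplying such a rule is the real content of the lemma, and your proposal leaves it unspecified.

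For comparison, the paper avoids ever comparing infinite sets at a fixed rank: for each $n$ it passes to the non-decreasing rearrangement $\overline{x}[n]$ of the initial segment $x[n]$, compares these sorted $n$-tuples lexicographically --- a comparison that registers even a one-pair transfer for all sufficiently large $n$, precisely because $h$ and its first occurrence exist by well-orderedness --- and then aggregates over $n$ with a filter $\mathcal{F}$ containing all co-finite sets; an ultrafilter extension of $\mathcal{F}$ yields the SWO. Sensitivity to finite changes combined with finite-permutation invariance is exactly what the sorted-initial-segment design secures and what your per-rank design, as written, lacks (one could likely repair it by comparing the counting functions $n\mapsto |A_\xi(\cdot)\cap\{1,\dots,n\}|$ eventually in $n$ and filtering, but that essentially reconstructs the paper's argument). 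Your remaining points are fine: monotonicity, the Szpilrajn-type ordering-extension (or ultrafilter) step for the SWO clause, and the reduction of the $[0,1]$ restriction via Lemma \ref{L1} are all unproblematic; the hole is at the central step.
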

Next theorem provides a precise characterization of representation for WE, AN and M.

\begin{theorem} \label{T4}
The following are equivalent:
\begin{enumerate}
\item[(i)]  There exists a social welfare function $W: Y^\N \rightarrow \mathbb{R}$ satisfying WE, M and AN.
\item[(ii)]  There exists a social welfare function $W: Y^\N \rightarrow \mathbb{R}$ satisfying WE, M.
\item[(iii)]  $Y$ is well-ordered.
\end{enumerate}
\end{theorem}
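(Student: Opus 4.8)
The plan is to prove Theorem \ref{T4} by establishing the implications (i) $\Rightarrow$ (ii) $\Rightarrow$ (iii) $\Rightarrow$ (i). The implication (i) $\Rightarrow$ (ii) is trivial, since dropping AN weakens the hypothesis on $W$. For (iii) $\Rightarrow$ (i), I would first invoke Lemma \ref{L1} (together with the remark that for monotonicity only increasing transformations are permitted) to reduce to the case $Y \subseteq [0,1]$, and then appeal to Lemma \ref{L3}, which already gives a SWR — and, via the axiom of choice, a SWO — satisfying GE, AN and M on $Y^\N$ when $Y(<)$ is well-ordered. Since $\gse \Rightarrow \we$, such a SWO a fortiori satisfies WE, M and AN; and a complete SWR that is additionally well-founded in the relevant sense can be represented by a real-valued function. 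Concretely, I would exhibit $W$ directly: for $x \in Y^\N$, the well-ordering of $Y$ gives each coordinate an ordinal rank below some countable ordinal, and one can encode the (finite-permutation class of the) sequence of ranks into a single real — the subtlety is that the construction of Lemma \ref{L3} must already yield something representable, so I would check that its SWO is in fact order-isomorphic to a suborder of $(\RR, \leq)$, e.g. by noting it is a countable-type lexicographic-like order with no uncountable antichains or increasing/decreasing $\omega_1$-chains.

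The heart of the theorem is the implication (ii) $\Rightarrow$ (iii): if there is a SWF $W \colon Y^\N \to \RR$ satisfying WE and M, then $Y$ is well-ordered. By Result \ref{RE1} it suffices to show $Y$ contains no infinite strictly decreasing sequence. Arguing by contradiction, suppose $y^{(0)} > y^{(1)} > y^{(2)} > \cdots$ is a strictly decreasing sequence in $Y$. The idea is to mimic the mechanism of Example \ref{Ex2}: from such a descending chain one builds, for each of uncountably many parameters, a pair of streams whose WE-comparison (combined with M to patch up a single coordinate) pins $W$-values into disjoint open intervals, contradicting separability of $\RR$. More precisely, I would use the descending chain to manufacture, for each subset or each real parameter $r$, streams $x(r), y(r) \in Y^\N$ built by interleaving blocks drawn from $\{y^{(n)} : n \in \N\}$ so that a single pairing function $\alpha$ with $\dom(\alpha) = \N$ witnesses $x(r) \prec y(r)$ under WE, and a second global pairing function (shifting blocks, exactly as the $\beta$ in Examples \ref{Ex1} and \ref{Ex2}) together with one monotonicity step witnesses $y(r) \prec x(s)$ whenever $r < s$. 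This forces $r \mapsto (W(x(r)), W(y(r)))$ to be a family of pairwise disjoint nonempty open intervals indexed by an uncountable set, which is impossible.

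The main obstacle I anticipate is arranging the combinatorics of the two global pairing functions so that they are genuinely \emph{total} (i.e. $\dom(\alpha) = \N$, as WE demands — unlike IE, which only needs an infinite domain) while still producing the required strict inequalities at \emph{every} paired coordinate; with only a descending chain (and no ascending structure) available in $Y$, each paired pair $(i,\alpha(i))$ must realize the pattern $y_i < x_i < x_{\alpha(i)} < y_{\alpha(i)}$ (or its mirror), so the block structure has to be chosen so that shifting it by the $\beta$-type map still leaves \emph{no} coordinate unpaired and \emph{every} paired pair correctly oriented. This is the step where Example \ref{Ex2}'s explicit three-stream trick (using $y'$ to absorb the one leftover coordinate via M) becomes essential, and the bookkeeping — ensuring the leftover coordinate is unique, that it moves in the M-favorable direction, and that the uncountably many interval pairs are actually disjoint for $r \neq s$ — is the part that requires care rather than cleverness. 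Once that is in place, the density/countability contradiction closes the argument exactly as in the proof of Theorem \ref{T1}.
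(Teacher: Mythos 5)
There is a genuine gap, and it sits in your $(iii)\Rightarrow(i)$ step. Your plan is to take the SWO produced by Lemma~\ref{L3} (which satisfies GE, AN and M) and then argue that it is representable by a real-valued function. But any real-valued representation of that order would itself be a SWF satisfying GE (hence IE, since $\gse \Rightarrow \ise$) together with AN, and Theorem~\ref{T1} of this very paper shows no such SWF exists once $|Y|\geq 4$; so for any well-ordered $Y$ with at least four elements your proposed check that the Lemma~\ref{L3} order ``is in fact order-isomorphic to a suborder of $(\RR,\leq)$'' must fail. The fallback of encoding ordinal ranks of coordinates into a single real is not a proof: you neither specify the encoding nor verify WE or M for it, and an order-preserving encoding of the Lemma~\ref{L3} order is impossible by the same obstruction. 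The conceptual point you missed is that statement $(i)$ only asks for \emph{some} function compatible with WE, M and AN, not for a representation of an order satisfying the stronger GE. The paper's construction is accordingly much more modest: since $Y$ is well-ordered, $W(x):=\min_{n\in\N} x_n$ is well-defined, and one checks WE (the minimum of the more equal stream strictly exceeds the minimum of the less equal one, because every coordinate is paired), M and AN directly.

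Your $(ii)\Rightarrow(iii)$ step is workable in spirit but both overbuilt and unconstructed. If $Y$ is not well-ordered, Result~\ref{RE1} gives an infinite strictly decreasing chain in $Y$, and transplanting Example~\ref{Ex2} onto that chain (replacing $-1,-2,-3,\dots$ by the chain elements) already yields, from a \emph{single} triple of streams $x,y,y'$, the inequalities $W(x)<W(y)\leq W(y')<W(x)$ --- a one-shot contradiction. No uncountable family of parameters, no disjoint-interval/separability argument, and none of the delicate bookkeeping you defer (total pairing functions for each $r$, the cross-comparisons $y(r)\prec x(s)$, disjointness for $r\neq s$) is needed; and since you do not actually carry out that bookkeeping, your version of $(ii)\Rightarrow(iii)$ is also not established as written, even though it is salvageable far more cheaply than you propose. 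The implications $(i)\Rightarrow(ii)$ you have right, and it is the only part that matches the paper without repair.
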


\begin{remark} \label{R3}
\emph{Note that $(ii)$ reveals that adding AN in this context does not alter the condition for the existence of SWF satisfying WE and M. 
In fact, Theorem \ref{T4} (specifically the equivalence $(i) \Leftrightarrow (ii)$) states that the existence of real-valued representation for WE, M and AN is equivalent to the existence of real-valued representation for WE and M (without AN). 
This is in sharp contrast with the situation occurring for IE; indeed Theorem \ref{T2} shows that real-valued representation for IE and M is possible if $Y$ has at most five elements, whereas Theorem \ref{T1} shows that IE and AN together never admit a real-valued representation (for non-trivial utility domains). 
So when dealing with IE, adding AN strongly effects the structure of the utility domain $Y$ for which representation is possible.}
\end{remark}

Now consider the situation when we drop M.
In Example \ref{Ex1} we have shown that there does not exist a SWR satisfying WE when $Y$ contains a subset isomorphic to the set of integers. 
Next theorem provides a precise characterization of representation for WE and AN.

\begin{theorem} \label{T5}
The following are equivalent:
\begin{enumerate}
\item[(i)]  There exists a social welfare function $W: Y^\N \rightarrow \mathbb{R}$ satisfying WE and AN.
\item[(ii)]  There exists a social welfare function $W: Y^\N \rightarrow \mathbb{R}$ satisfying WE.
\item[(iii)]  $Y\subset [0, 1]$ does not contain a subset order isomorphic to the set of integers $\mathbb{Z}$.
\end{enumerate}
\end{theorem}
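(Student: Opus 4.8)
\textbf{Proof plan for Theorem \ref{T5}.}

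The implications $(i) \Rightarrow (ii)$ are trivial (an SWF satisfying WE and AN in particular satisfies WE). The core of the argument therefore splits into two parts: first $(ii) \Rightarrow (iii)$, the impossibility direction, and then $(iii) \Rightarrow (i)$, the construction. For the contrapositive of $(ii) \Rightarrow (iii)$, suppose $Y$ contains a subset order isomorphic to $\Z$. By Lemma \ref{L1} (invariance under monotone transformations — here we only need the version not combined with monotonicity, so increasing or decreasing transformations are both allowed), representability is preserved under relabeling the utility domain, so without loss of generality we may assume $\Z \subseteq Y$ and work on $\Z^\N$. Then the construction in Example \ref{Ex1} applies verbatim: the two streams $x$, $y$ built there, together with the two pairing functions $\alpha$ and $\beta$, yield $x \prec y$ via $\alpha$ and $y \prec x$ via $\beta$ — a direct contradiction with transitivity (indeed, with $W$ being a function), so no SWF satisfying WE can exist. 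Since Example \ref{Ex1} only uses WE (not AN, not M), this rules out even the weakest of the three hypotheses in $(i)$--$(ii)$.

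For $(iii) \Rightarrow (i)$, assume $Y \subseteq [0,1]$ contains no subset order isomorphic to $\Z$. The plan is to exploit the order-theoretic structure forced by this hypothesis. By Result \ref{RE2}, a set of type $\sigma$ is characterized by (a) having no first or last element and (b) every cut being a "jump". So "no subset of type $\Z$" should be leveraged via a decomposition of $Y(<)$. The key structural fact I would isolate is: if $Y(<)$ has no subset of order type $\sigma$, then $Y$ can be partitioned (or stratified) in a way that controls how a weak-equity pairing can act — concretely, I expect to show that any $\alpha \in \Pi$ witnessing $y \prec x$ with $\dom(\alpha) = \N$ must, on each of these pieces, produce an infinite strictly decreasing sequence somewhere, unless the pieces themselves are well-ordered or reverse-well-ordered in a compatible way. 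The cleanest route is probably to show that "no $\sigma$-subset" is equivalent to $Y$ being a countable union (or a finite union, by compactness-type reasoning on $[0,1]$) of well-ordered pieces, or more precisely that $Y(<)$ and its reverse $Y(>)$ each decompose nicely; then define $W$ by a lexicographic-style or weighted-sum encoding analogous to the $W$ in Proposition \ref{P1} and Proposition \ref{P2}: assign to each utility stream a real number recording "how often" the low utilities (bottom of well-ordered pieces) versus high utilities appear, with geometrically decaying weights $2^{-n}$, $3^{-n}$, etc. indexed by coordinate, so that any weak-equity transfer strictly decreases the resulting value. One then checks AN is automatic because the encoding, if set up symmetrically enough, can be made permutation-invariant — or, following Remark \ref{R3}, one argues that adding AN is free here because the obstruction (a $\Z$-subset) is already the sole obstruction without AN, and the construction witnessing $(ii)$ can be upgraded to satisfy AN by composing with a canonical reordering of each orbit.

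The main obstacle I anticipate is the construction in $(iii) \Rightarrow (i)$, and specifically getting the weights/encoding to respect \emph{all} weak-equity moves simultaneously — unlike GE or IE, weak equity forces $\dom(\alpha) = \N$, so the pairing touches every coordinate, and infinitely many "upward" pushes at the rich coordinates must be outweighed by the infinitely many "downward" pushes at the poor coordinates in a way that is coordinate-uniform yet still yields a well-defined real number. The "no $\Z$-subset" hypothesis is exactly what should prevent the pathological oscillation seen in Example \ref{Ex1} (where the two intertwined copies of $\N$ and $\NM$ let $\alpha$ and $\beta$ disagree), but translating this into a convergent series representation requires care: I would likely need to first prove a lemma that $Y(<)$ with no $\sigma$-subset admits an order-preserving map into a well-ordered set "from below" and a reverse-order-preserving map "from above" that together separate points, and only then define $W$ as a suitably weighted combination of the images of the coordinates under these two maps. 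Verifying that a weak-equity quadruple $y_i < x_i < x_{\alpha(i)} < y_{\alpha(i)}$ strictly decreases the relevant partial sums — and that this strict decrease survives summation over all $i \in \N$ — is the technical heart, and handling the case distinctions (whether the moved values land in the well-ordered part, the reverse-well-ordered part, or straddle) is where the bulk of the work in the appendix will go.
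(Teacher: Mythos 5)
Your handling of $(i) \Rightarrow (ii)$ and $(ii) \Rightarrow (iii)$ is fine and matches the paper: the first implication is trivial, and the second follows from the construction in Example \ref{Ex1} (your extra reduction via Lemma \ref{L1} to the case $\Z \subseteq Y$ is harmless; one can equally well run the Example \ref{Ex1} streams directly inside the order-isomorphic copy of $\Z$, since only the order relations among the utility values are used).

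The genuine gap is in $(iii) \Rightarrow (i)$: what you offer there is a research plan, not a proof, and the plan points in a direction that is unlikely to close. Your key structural step --- that ``no subset of order type $\sigma$'' yields a decomposition of $Y$ into well-ordered (or well-ordered plus reverse-well-ordered) pieces --- is asserted, not proved, and even granted it, your proposed encoding by coordinate-indexed geometric weights $2^{-n}$, $3^{-n}$ runs into exactly the two obstacles you yourself flag but do not resolve: (a) such discounted sums are not permutation-invariant, so AN fails (this is precisely the defect of the SWFs in Propositions \ref{P1} and \ref{P2} noted in Question \ref{Q1}), and your fix by ``composing with a canonical reordering of each orbit'' is not an argument; and (b) for WE the pairing has $\dom(\alpha)=\N$, so one must show that infinitely many upward moves at poor coordinates beat infinitely many downward moves at rich coordinates uniformly in the weights, which is the technical heart and is left open. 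The paper avoids all of this with a short, explicit, symmetric construction: with $\underline{Y}=\inf Y$, $\overline{Y}=\sup Y$ and $\rho\in(0,1)$, set
\[
W(x) \;=\; \rho\, \inf_{n\in\N}\left|x_n-\underline{Y}\right| \;+\; (1-\rho)\, \inf_{n\in\N}\left|\overline{Y}-x_n\right|.
\]
Since $W$ depends only on the set of values of the stream through two infima, AN is immediate (which is the real content behind Remark \ref{R3}, not an abstract ``AN is free'' principle). For WE one checks $W(z)\geq W(x)$ directly from the quadruple inequalities, and if equality held, a three-case analysis applies: if either relevant infimum is attained one gets an immediate numerical contradiction, and if neither is attained one extracts from the two non-attained infima an increasing sequence of $x$-values and a decreasing sequence of $z$-values whose union is a subset of $Y$ satisfying both conditions of Result \ref{RE2}, i.e.\ a subset of order type $\sigma$, contradicting $(iii)$. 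So the ``no $\Z$-subset'' hypothesis enters only at this last step, through Result \ref{RE2}, rather than through any decomposition of $Y$; without something playing this role, your outline does not establish $(iii) \Rightarrow (i)$.
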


Note that the same reasoning on the role of AN as highlighted in Remark \ref{R3} also applies for Theorem \ref{T5}.

\section{Conclusions}\label{s5}
In this paper we have proposed an intuitive  generalization of strong equity and Pigou-Dalton transfer principle.
We have investigated the conditions for the existence of social welfare relations satisfying the new equity condition (by itself or in conjunction with anonymity and/or monotonicity axioms) and their real-valued representation.
The results indicate that the order properties of the one period utility domain $Y$ determine the conditions for the existence of social welfare relations satisfying GE and/or M.
In other words, cardinality or measure (size) of $Y$ is not the relevant property for the existence of the equitable social welfare relations.
However, when one investigates the real-valued representations, cardinality of $Y$ does play a crucial role in some cases.

Tables \ref{Ta1} and \ref{Ta2} below summarize our main results.
We abbreviate consequentialist equity by CE, procedural equity by PE and efficiency by E in the tables below.

\begin{table}[ht]
\centering
\begin{tabular}{l|c |c| c |c |c }
\hline\hline 
CE&PE&E&$|Y|$& Representation & Result \\ [0.25ex] 
\hline\hline
GE&-&M &$\leq 5$&Yes& Proposition \ref{P1}\\\hline
GE&-&-&$\leq 7$&Yes&Proposition \ref{P2}\\ \hline
IE&AN&-&$\geq 4$& No&Theorem \ref{T1}\\ \hline
IE&-&M &$\geq 6$&No&Theorem \ref{T2}\\\hline
IE&-&-&$\geq 8$&No&Theorem \ref{T3}\\ \hline
\end{tabular}
\caption{Possibility vs impossibility results for GE and IE}\label{Ta1}
\end{table}

\begin{table}[ht]
\centering
\begin{tabular}{l|l | c | c |l |l }
\hline\hline 
CE&PE&E&$Y(<)$& SWR & Result \\ [0.25ex] 
\hline\hline
WE&AN&M&well-ordered&admits SWF& Theorem \ref{T4}\\\hline
WE&-&M&not well-ordered&does not exist& Example \ref{Ex2}\\\hline
WE&AN&-&no subset of order type $\sigma$&admits SWF& Theorem \ref{T4}\\\hline
WE&-&-&subset of order type $\sigma$&does not exist& Example \ref{Ex1}\\\hline
\end{tabular}
\caption{Possibility vs impossibility results for WE}\label{Ta2}
\end{table}


The results summarized in Tables \ref{Ta1} and \ref{Ta2} describe the exact restrictions on the utility domain $Y$ for representation of equitable social welfare relations.
We have expressed explicit formulae of the social welfare functions whenever they exist.
Therefore these social welfare functions could be useful in policy formulations.

In future research, we consider extending the study in two distinct directions.
Careful scrutiny of the proofs of Theorems \ref{T2} and \ref{T3} reveals the pairing functions let the distance (number of generations) between the rich and the poor agents $|n-\alpha(n)|$ to be arbitrary; in fact, $|n-\alpha(n)|$ diverges to infinity.
It will be interesting to explore if these impossibility results persist even when $|n-\alpha(n)|$ is not allowed to exceed some large but finite integer (such as in Theorem \ref{T1}), or more generally when we require more structural characteristics of the pairing functions.
Such restrictions possess intuitive appeal in the sense that it makes the redistribution relatable to the individual generations.
It would also be worthwhile to explore variants of generalized equity weaker than IE which are still stronger than WE (for instance asymptotic equity%
\footnote{Asymptotic equity is defined as follows: Given $x$, $y \in X$ if there is $\alpha \in \Pi$ such that $d(\dom(\alpha))>0$ and for every $i \in \dom(\alpha)$ one has either $y_i < x_i < x_{\alpha(i)} < y_{\alpha(i)}$ or $y_{\alpha(i)} < x_{\alpha(i)} < x_i < y_i$ and for all $k \notin \dom(\alpha)$ one has $x_k = y_k$, then $y \prec x$.
Here $d(A)$ is the asymptotic density of $A\subset \N$.}) and their role in evaluation of infinite utility streams.
We intend to report results on this research in near future.

The second line of research deals with the social welfare orders when they exist but do not admit real-valued representations.
There exist preference relations which are complete (i.e., preference orders) with no social welfare function and yet are endowed with explicit description.
The lexicographic preference is the most commonly cited example of such preference orders.
Given explicit formulation of the order, it is still considered to be useful in policy making.
Therefore, it is a fruitful exercise to explore if the equitable social welfare orders which have been shown to exist in Lemma \ref{L3} but fail to admit SWF (considered in Theorems \ref{T1}, \ref{T2} and \ref{T3}) are equipped with explicit description.
Outcomes of this investigation would add to the rich body of results on constructive nature of equitable social welfare orders.

\section{Appendix}

\begin{proof}[\large \textbf{Proof of Lemma \ref{L1}}]
We treat two cases: (i) $f$ is increasing, and (ii) $f$ is decreasing.
\begin{enumerate}
\item[(i)]{Let $f$ be an increasing function. 
Define $\preccurlyeq_{J}$ as 
\begin{equation}\label{0L1}
\left\langle x_1, x_2, \cdots\right\rangle \preccurlyeq_{J} \left\langle y_1, y_2,\cdots\right\rangle \;\ifif \left\langle f(x_1), f(x_2), \cdots\right\rangle \preccurlyeq \left\langle f(y_1), f(y_2),\cdots\right\rangle,
\end{equation}%
and $V:J\rightarrow \RR$ by:%
\begin{equation}\label{0L2}
V\left(z_{1}, z_{2},\cdots\right) = W\left(f(z_{1}), f(z_{2}), \cdots\right)  
\end{equation}%
Then, $\preccurlyeq_{J}$ and $V$ are well-defined, since $f$ maps $I$ into $Y$.
To verify AN, let $z$, $z^{\prime}\in J$, with $z_{r}^{\prime} = z_{s}$, $z_{s}^{\prime}=z_{r}$, and $z_{i}^{\prime}=z_{i}$ for all $i\neq r, s$. 
Without loss of generality, assume $r<s$.
Then,
\begin{align}\label{0L3}
\left(z_{1}^{\prime}, z_{2}^{\prime}, \cdots\right) &= \left(f(z_{1}^{\prime}), f(z_{2}^{\prime}), \cdots, f(z_{r}^{\prime}), \cdots, f(z_{s}^{\prime}), \cdots\right)\notag\\
&\sim \left(f(z_{1}^{\prime}), f(z_{2}^{\prime}), \cdots, f(z_{s}^{\prime}), \cdots, f(z_{r}^{\prime}),\cdots\right)\\
&=\left(f(z_{1}), f(z_{2}), \cdots, f(z_{r}), \cdots, f(z_{s}), \cdots\right) = \left(z_{1}, z_{2}, \cdots\right),\notag
\end{align}
the second line of (\ref{0L3}) following from the fact that $\preccurlyeq$ satisfies AN on $X$.
Hence, $\left(z_{1}^{\prime}, z_{2}^{\prime}, \cdots\right) \preccurlyeq_J \left(z_{1}, z_{2}, \cdots\right)$.
Note that the fact that $f$ is increasing is nowhere used in this demonstration.
Identical argument works for the SWF.

To check that the SWF $V$ satisfies GE, let $z$, $z^{\prime}\in J$ with $\alpha \in \Pi$ such that for every $i \in \dom(\alpha)$ one has
\[
\text{either}\; z_i < z^{\prime}_i < z^{\prime}_{\alpha(i)} < z_{\alpha(i)}\; \text{or}\; z_{\alpha(i)} < z^{\prime}_{\alpha(i)} < z^{\prime}_i < z_i.
\]
We consider the case $z_i < z^{\prime}_i < z^{\prime}_{\alpha(i)} < z_{\alpha(i)}$ (proof in remaining case works on similar line).
We have
\begin{equation}\label{0L4}
f(z_{i})< f(z_{i}^{\prime}) < f(z^{\prime}_{\alpha(i)}) < f(z_{\alpha(i)})\;\;\text{for each} \; i\in \alpha, 
\end{equation}%
since $f$ is increasing. 
Consequently,%
\begin{align*}
V(z_{1}^{\prime}, z_{2}^{\prime}, \cdots)=W(f(z_{1}^{\prime}), f(z_{2}^{\prime}), \cdots)  > W(f(z_{1}), f(z_{2}), \cdots) = V(z_{1}, z_{2}, \cdots) 
\end{align*}%
where the inequality follows from the facts that $W$ satisfies GE on $X$, $f(z_{i})\in Y$,  $f(z_{i}^{\prime})\in Y$ for all $i\in \N$ and (\ref{0L4}) holds for all $i\in \alpha$.}

\item[(ii)]{Let $f$ be a decreasing function. 
We consider $\preccurlyeq_{J}$  and $V:J\rightarrow \RR$  as defined in (\ref{0L1}) and (\ref{0L2}) respectively.
One can check that AN is satisfied by following the steps used in (i) above. 
To check that the SWF $V$ satisfies GE, let $z$, $z^{\prime}\in J$ with $\alpha \in \Pi$ such that for every $i \in \dom(\alpha)$ one has
\[
\text{either}\; z_i < z^{\prime}_i < z^{\prime}_{\alpha(i)} < z_{\alpha(i)}\; \text{or}\; z_{\alpha(i)} < z^{\prime}_{\alpha(i)} < z^{\prime}_i < z_i.
\]
We consider the case $z_i < z^{\prime}_i < z^{\prime}_{\alpha(i)} < z_{\alpha(i)}$ (proof in remaining case works on similar line).
We have
\begin{equation}\label{0L5}
f(z_{\alpha(i)}) < f(z^{\prime}_{\alpha(i)}) < f(z_{i}^{\prime}) < f(z_{i}) \;\;\text{for each} \; i\in \alpha, 
\end{equation}%
since $f$ is decreasing. 
Consequently,%
\begin{align*}
V(z_{1}^{\prime}, z_{2}^{\prime}, \cdots)= W(f(z_{1}^{\prime}), f(z_{2}^{\prime}), \cdots) > W(f(z_{1}), f(z_{2}), \cdots) = V(z_{1}, z_{2}, \cdots)
\end{align*}%
where the inequality follows from the facts that $W$ satisfies GE on $X$, $f(z_{i})\in Y$,  $f(z_{i}^{\prime})\in Y$ for all $i\in \N$ and (\ref{0L5}) holds for all $i\in \alpha$.
Similar conclusions can be drawn about the $\preccurlyeq_{J}$ following the argument above.}
\end{enumerate}
\end{proof}

\begin{proof}[\large \textbf{Proof of Lemma \ref{L3}}]
Let $Y \subseteq [0,1]$ be well-ordered and let $\overline{x}[n]$ be the non-decreasing reordering of $x[n]$, which means, more precisely, 
\[
\overline{x}[n] := \langle \overline{x}[n]_1, \overline{x}[n]_2, \cdots, \overline{x}[n]_n \rangle
\] 
is recursively defined on $k \leq n$, as follows:
\[
\begin{split}
\overline{x}[n]_1 &:= \min\{x_i: i \leq n \} \\
\overline{x}[n]_{k} &:= \min\{x_i: i \leq n \land \forall j <k (x_i \geq \overline{x}[n]_j).
\end{split}
\]
Then consider the lexicographic order $\leq^n_{\text{lex}}$ on $Y^n$. 
Let $\mathcal{F}$ be any filter on $\N$ containing all co-finite subsets of $\N$.
Define the relation $\leq_{\mathcal{F}}$ on $Y^\N$ as follows: for every $x$, $y \in Y^\N$
\[
x \leq_\mathcal{F} y \quad \ifif \quad \{n \in \N: \overline{x}[n] \leq^n_\text{lex} \overline{y}[n]  \} \in \mathcal{F}.
\]
We need to check that $\leq_\mathcal{F}$ satisfies GE, AN and M.
AN and M are trivial. 
To show that SWR satisfies GE pick $x$, $y \in Y^\N$ such that there exists $\alpha \in \Pi$ such that for all $n \in \dom(\alpha)$,
\begin{equation} \label{eq1}
\text{either} \quad  x_n < y_n < y_{\alpha(n)} < x_{\alpha(n)} \quad \text{or} \quad x_{\alpha(n)} < y_{\alpha(n)} < y_n < x_n. 
\end{equation}
We show that there exists $\mathbf{k} \in \N$ such that for all $m \geq k$ one has $\overline{x}[m] \leq^m_\text{lex} \overline{y}[m]$. 
Let $h^x:= \min\{ x_n: n \in \dom(\alpha) \}$ (and analogously define $h^y$); then put $\textbf{k}:=\min\{n \in \N: x_n=h^x\}$. 
We show that for all $m \geq \textbf{k}$ we get $\overline{x}[m] <^m_{\text{lex}} \overline{y}[m]$ in two steps.
\begin{enumerate}[(a)]
\item First, pick $\textbf{l} \leq \textbf{k}$ such that $\overline{x}[\textbf{k}]_{\textbf{l}} = h^x < \overline{y}[\textbf{k}]_{\textbf{l}}$.
Since for every $i < {\textbf{l}}$ one has $\overline{x}[\textbf{k}]_i= \overline{y}[\textbf{k}]_i$, $\overline{x}[\textbf{k}] <^{\textbf{k}}_\text{lex} \overline{y}[\textbf{k}]$.
\item We next show for all $m \geq \textbf{k}$, $\overline{x}[m] <^m_{\text{lex}} \overline{y}[m]$ by induction.
The details for the step $m=\textbf{k}+1$ illustrate the general $m \geq \textbf{k}$. 
We can distinguish two cases:
\begin{itemize}
\item $m \in \dom(\alpha)$: $x_m \geq h^x$ and $y_m > h^x$ by definition of $h^x$.
For all $i < \textbf{l}$, $\overline{x}[m]_i= \overline{x}[\textbf{k}]_i = \overline{y}[\textbf{k}]_i= \overline{y}[m]_i$, and $\overline{x}[m]_{\textbf{l}}= h^x < \overline{y}[m]_{\textbf{l}}$. 
Hence, $\overline{x}[m] <^m_\text{lex} \overline{y}[m]$.
\item $m \notin \dom(\alpha)$: Two possibilities need to be considered.
\begin{enumerate}[(i)]
\item{$x_m \geq h^x$:  Then $\overline{x}[m]_{\textbf{l}}= \overline{x}[\textbf{k}]_{\textbf{l}}< \overline{y}[\textbf{k}]_{\textbf{l}} = \overline{y}[m]_{\textbf{l}}$ and for all $i < {\textbf{l}}$, $\overline{x}[m]_i = \overline{y}[m]_i$.
Hence, $\overline{x}[m] <^m_\text{lex} \overline{y}[m]$.}
\item{$x_m < h^x$: Since $x_m=y_m$, we have for all $i \leq \textbf{l}$, $\overline{x}[m]_i = \overline{y}[m]_i$ and $\overline{x}[m]_{\textbf{l}+1}= \overline{x}[\textbf{k}]_{\textbf{l}}= h^x < \overline{y}[\textbf{k}]_{\textbf{l}} = \overline{y}[m]_{\textbf{l}+1}$.
Hence, $\overline{x}[m] <^m_\text{lex} \overline{y}[m]$.}
\end{enumerate}
\end{itemize}
\end{enumerate}

We have thus shown that $\textbf{k}$ satisfies the required property, and therefore $x <_{\mathcal{F}} y$, since $\mathcal{F}$ contains all co-finite subsets of $\N$.
Finally, notice that if $\mathcal{U} \supseteq \mathcal{F}$ is an ultrafilter then $\leq_{\mathcal{U}}$ is even total and thus we have obtained a SWO. 
\end{proof}

\begin{proof}[\large \textbf{Proof of Proposition \ref{P1}}]
\begin{enumerate}
\item[(a)]{GE: Pick $x$, $y \in Y^\N$ such that there exists $\alpha \in \Pi$ such that for all $n \in \dom(\alpha)$, either $x_n < y_n < y_{\alpha(n)} < x_{\alpha(n)}$ or $x_{\alpha(n)} < y_{\alpha(n)} < y_n < x_n$.
Consider the case $x_n < y_n < y_{\alpha(n)} < x_{\alpha(n)}$.
Note that either $x_n=a$ or $x_n =b$, i.e., either $n\in N(x)$ or $n\in M(x)$.
For $n\in N(x)\cap \alpha$, let $N(xb) := \{n: y_{n}=b\}$ and $N(xc) := \{n:y_{n}=c\}$.
Then
\[
W(y) -W(x) = \underset{n\in N(xb)}{\sum}\left(\frac{1}{2^{n}} - \frac{1}{3^{n}}\right) + \underset{n\in N(xc)}{\sum}\frac{1}{2^{n}} + \underset{n\in M(x)\cap \alpha}{\sum}\frac{1}{3^{n}}>0.
\]}
\item[(b)]{M: Let $y\geq x$. 
Then $N(y)\subset N(x)$ and $M(y)\subset M(x)$ in case $N(x)$ or $M(x)$ is non-empty.
Then 
\[
W(y)-W(x) = \underset{n\in N(x)\setminus N(y)}{\sum}\frac{1}{2^{n}} +\underset{n\in M(x)\setminus M(y)}{\sum} \frac{1}{3^{n}}\geq 0.
\]
If $N(x)$ and $M(x)$ are non-empty, then 
\[
W(y) =  \overset{\infty}{\underset{n=1}{\sum}}\frac{y_{n}}{2^{n}}\geq \overset{\infty}{\underset{n=1}{\sum}}\frac{x_{n}}{2^{n}}=W(x).
\]}
\end{enumerate}
Moreover, it can be checked that $W$, defined by (\ref{P1E1}), also satisfies weak Pareto; i.e.,  if $x$, $y\in X$, and $x\gg y$, then $W(x)>W(y)$.
\end{proof}

\begin{proof}[\large \textbf{Proof of Proposition \ref{P2}}]
Pick $x$, $y \in Y^\N$ such that there exists $\alpha \in \Pi$ such that for all $n \in \dom(\alpha)$, either $x_n < y_n < y_{\alpha(n)} < x_{\alpha(n)}$ or $x_{\alpha(n)} < y_{\alpha(n)} < y_n < x_n$.
Consider the case $x_n < y_n < y_{\alpha(n)} < x_{\alpha(n)}$.
We have to show that $W(y)>W(x)$.
Note that $y_{\alpha(n)}\in \{c, d, e, f\}$.
So, the contribution of $y_{\alpha(n)}$ to $W(y)$ (call it $U(y)$) is either $-\frac{1}{3^{\alpha(n)}}$ or $0$.
If it is $-\frac{1}{3^{\alpha(n)}}$, then $x_{\alpha(n)}=g$ and contribution of $x_{\alpha(n)}$ to $W(x)$ (i.e., $U(x)$) is $-\frac{1}{2^{\alpha(n)}}<-\frac{1}{3^{\alpha(n)}}$; and if $U(y)=0$, then $U(x)\leq 0=U(y)$.
Thus, in either case,
\begin{equation}\label{P2E2}
U(x)\leq U(y).  
\end{equation}
Furthermore, the inequality is strict in (\ref{P2E2}) if $y_{\alpha(n)}\in \{e, f\}$.
Note that $y_{n}\in \{b, c, d, e\}$.
So, contribution of $y_n$ to $W(y)$ (call it $V(y)$) is either $-\frac{1}{3^{n}}$ or $0$.
If it is $-\frac{1}{3^{n}}$, then $x_n=a$ and $V(x)$ is $-\frac{1}{2^{n}}<-\frac{1}{3^{n}}$; and if $V(y)=0$, then $V(x)\leq 0=V(y)$.
Thus, in either case, 
\begin{equation}\label{P2E3}
V(x)\leq V(y) 
\end{equation}%
Furthermore, the inequality is strict in (\ref{P2E3}) if $y_{n}\in \{b, c\}$.
If neither (\ref{P2E2}) nor (\ref{P2E3}) holds with a strict inequality, then we must have $y_{\alpha(n)}\in \{c, d\}$ and $y_{n}\in \{d, e\}$.
But then $y_{n}\geq d \geq y_{\alpha(n)}$, a contradiction. 
Thus at least one of the inequalities in (\ref{P2E2}) and (\ref{P2E3}) is strict for each $n\in\alpha$, and so $W(x)<W(y)$.
\end{proof}

\begin{proof}[\large \textbf{Proof of Theorem \ref{T2}}]

We introduce the following partition of $\N$ which will be useful in the proof of Theorems \ref{T2} and \ref{T3}.
Let $q_1$, $q_2, \cdots$ be an enumeration of all the rational numbers in $(0, 1)$.
Given any $r\in (0, 1)$, $L(r):= \left\{l_1, l_2, \cdots, l_k, \cdots\right\}$ and $U(r):= \{u_1(r), u_2(r), \cdots, u_k(r), \cdots\}$ are defined as above in the proof of Theorem \ref{T1}.
Then let
\[
\begin{split}
\textbf{L}(r) &:= \{2(l_1!)+1, 2(l_1!)+2, 2(l_2!)+1, 2(l_2!)+2, \dots\}:= \{ 2(l_k!)+1, 2(l_k!)+2: k \in \N \}  \\
\textbf{U}(r) &:= \{2(u_1!)+1, 2(u_1!)+2, 2(u_2!)+1, 2(u_2!)+2, \dots\}  := \{ 2(u_k!)+1, 2(u_k!)+2: k \in \N \} 
\end{split}
\]
and put $\textbf{LU}(r):= \textbf{L}(r)\cup \textbf{U}(r)$. 
Given $s,r \in (0,1)$ with $s \neq r$, note that $\mathbf{L}(r) \neq \mathbf{L}(s)$ and $\mathbf{U}(r) \neq \mathbf{U}(s)$, but $\mathbf{LU}(r) = \mathbf{LU}(s)$. Let
$\mathbf{I}:=\N \setminus \mathbf{LU}(r)$; note that $\mathbf{I}$ does not depend on the choice of $r$.
Moreover $\mathbf{I}$ is clearly infinite and for every $r \in (0,1)$ it holds
\begin{equation}\label{57P01}
\mathbf{I} \cup \textbf{L}(r) \cup \textbf{U}(r) =\N.
\end{equation}

$(i) \Rightarrow (ii)$ is trivial, since GE is stronger than IE and $(iii) \Rightarrow (i)$ follows from Proposition \ref{P1}.

It is left to show $(ii) \Rightarrow (iii)$.
We argue by contradiction. 
Let $W: Y^\N \rightarrow \RR$ be a SWF satisfying IE and M, where $Y:=\{a, b, c, d, e, f\}$ with $a<b<c<d<e<f$.
Given the partition in (\ref{57P01}), the utility streams  $\langle x(r)\rangle$, $\langle y(r)\rangle$ are defined as follows.

\begin{equation}\label{5P01}
x_t(r) = \left\{ 
\begin{array}{ll}
c & \text{for}\; t \in \mathbf{I} \cap \odd \\
f & \text{for}\; t \in \mathbf{I} \cap \even \\
a & \text{for}\; t \in \textbf{U}(r) \\
b & \text{for}\; t \in \textbf{L}(r) \\
\end{array}
\right.
\end{equation}

\begin{equation}\label{5P02}
y_t(r) = \left\{ 
\begin{array}{ll}
d & \text{for}\; t \in \mathbf{I} \cap \odd \\
e & \text{for}\; t \in \mathbf{I} \cap \even \\
x_t(r) & \text{otherwise}.
\end{array}
\right.
\end{equation}

We first claim that $x(r) \prec y(r)$.
Note for each $t\in \textbf{LU}(r)$, $x_t(r) = y_t(r)$.
Each pair of terms in $\mathbf{I}$ contains $c$ and $f$ in $x(r)$ at odd and even locations respectively.
Corresponding pair of terms are assigned values $d$ and $e$ in $y(r)$ at odd and even locations respectively.
Therefore or each pair of terms in $\mathbf{I}$
\[
x_{\odd}(r) = c < d = y_{\odd}(r) < e = y_{\even}(r) < f = x_{\even}(r).
\]
Since $\mathbf{I}:=\N\setminus \textbf{LU}(r)$ is infinite, $x(r) \prec y(r)$ by IE. 
Therefore, 
\begin{equation}\label{5P03}
W(x(r))<W(y(r)).
\end{equation}

\noindent Now pick $s \in (r, 1)$ and claim that $y(r) \prec x(s)$.

In order to prove this claim, observe that:
\begin{itemize}
\item for all $t\in \mathbf{I} \cap \even$, $y_t(r) = e < f = x_t(s)$.
\item for all $t\in\textbf{U}(s)$, $y_t(r) = a = x_t(s)$; for all $t \in \textbf{L}(r)$, $y_t(r) = b = x_t(s)$;
\item since $\mathbf{I} \cap \odd$ and $\textbf{L}(s)\cap \textbf{U}(r)$ are both infinite subsets of $\N$, we can pick a pairing function $\alpha \in \Pi$ with $|\dom(\alpha)|=\infty$ such that for each $t\in \mathbf{I}\cap \odd$ we choose $\alpha(t) \in \textbf{L}(s)\cap \textbf{U}(r)$ such that for the pair of coordinates $(t, \alpha(t))$, we have 
\[
y_{\alpha(t)}(r) = a < b = x_{\alpha(t)}(s) <  c = x_t(s) < d =y_t(r).
\]
\end{itemize}
Hence, by IE and M, it follows that $y(r)\prec x(s)$, which gives
 \begin{equation}\label{5P07}
W(y(r))<W(x(s)).
\end{equation} 
Let $\{(W(x(r)), W(y(r))): r \in (0, 1)\}$. 
We have shown that $(W(x(r)), W(y(r)))$ and $(W(x(s)), W(y(s)))$ are pairwise disjoint for each $r<s$ pair, contradicting the fact that the rationals are dense in $(0, 1)$.
\end{proof}

\begin{proof}[\large \textbf{Proof of Theorem \ref{T3}}]
$(i) \Rightarrow (ii)$ is trivial, since GE is stronger than IE and $(iii) \Rightarrow (i)$ follows from Proposition \ref{P2}.

It is left to show $(ii) \Rightarrow (iii)$
The idea of the proof is similar to the one in Theorem \ref{T2}.
We argue by contradiction.
Let $W: Y^\N\rightarrow \RR$ be a SWF satisfying IE, where $Y:= \{ a,b,c,d,e,f,g,h \}$ with $a<b<c<d<e<f<g<h$. 
Given the partition in (\ref{57P01}), the utility streams  $\langle x(r)\rangle$, and $\langle y(r)\rangle$ are defined as follows.
\begin{equation}\label{7P01}
x_t(r) = \left\{ 
\begin{array}{ll}
c & \text{for}\; t \in \mathbf{I} \cap \odd \\
f & \text{for}\; t \in \mathbf{I} \cap \even \\
a & \text{for}\; t \in \textbf{U}(r)\cap \odd\\
h & \text{for}\; t \in \textbf{U}(r)\cap \even\\
b & \text{for}\; t \in \textbf{L}(r)\cap \odd\\
g & \text{for}\; t \in \textbf{L}(r)\cap \even\\
\end{array}
\right.
\end{equation}
\begin{equation}\label{7P02}
y_t(r) = \left\{ 
\begin{array}{ll}
d & \text{for}\; t \in \mathbf{I} \cap \odd \\
e & \text{for}\; t \in \mathbf{I} \cap \even \\
x_t(r) & \text{otherwise}.
\end{array}
\right.
\end{equation}
We first claim that $x(r) \prec y(r)$.
Note for each $t\in \textbf{LU}(r)$, $x_t(r) = y_t(r)$.
Each pair of terms in $\mathbf{I}$ contains $c$ and $f$ in $x(r)$ at odd and even locations respectively.
Corresponding pair of terms are assigned values $d$ and $e$ in $y(r)$ at odd and even locations respectively.
Therefore for each pair of terms in $\mathbf{I} := \N\setminus \textbf{LU}(r)$
\[
x_{\odd}(r) = c < d = y_{\odd}(r) < e = y_{\even}(r) < f = x_{\even}(r).
\]
Since $\mathbf{I}$ is infinite, $x(r) \prec y(r)$ by IE. 
Therefore, 
\begin{equation}\label{7P03}
W(x(r))<W(y(r)).
\end{equation}
Now let $s\in (r,1)$.
In order to prove $y(r) \prec x(s)$, observe that:
\begin{itemize}
\item for all $t\in\textbf{U}(s)\cap \odd$, $y_t(r) = a = x_t(s)$; and for all $t\in\textbf{U}(s)\cap \even$, $y_t(r) = h = x_t(s)$.
\item for all $t\in \textbf{L}(r)\cap \odd$, $y_t(r) = b = x_t(s)$; and for all $t\in \textbf{L}(r)\cap \even$, $y_t(r) = g = x_t(s)$.
\item for all $t\in\textbf{L}(s)\cap \textbf{U}(r)\cap\odd$, $y_t(r) = a < b = x_t(s)$ and for all $t\in \mathbf{I}\cap \odd$, $y_t(r) = d > c =x_t(s)$. Since $\textbf{L}(s)\cap \textbf{U}(r)\cap\odd$ and $\mathbf{I} \cap \odd$ are both infinite, we can then pick $\alpha \in \Pi$ with $\dom(\alpha)=\infty$ such that for every $t \in \mathbf{I} \cap \odd$,
\[
y_{\alpha(t)}(r) = a < b = x_{\alpha(t)}(s) <  c = x_t(s) < d =y_t(r).
\] 
\item  for all $t\in\textbf{L}(s)\cap \textbf{U}(r)\cap\even$, $y_t(r) = h > g = x_t(s)$ and for all $t\in \mathbf{I}\cap \even$, $y_t(r) = e < f =x_t(s)$. Since $\textbf{L}(s)\cap \textbf{U}(r)\cap\even$ and $\mathbf{I} \cap \even$ are both infinite, we can then pick $\beta \in \Pi$ with $|\dom(\beta)|=\infty$ such that for every $t \in \mathbf{I} \cap \even$,
\[
y_{\beta(t)}(r) = h > g = x_{\beta(t)}(s) >  f = x_t(s) > e =y_t(r).
\]  
\end{itemize}
Hence, by IE we have $y(r)\prec x(s)$, which gives
\begin{equation}\label{7P04}
W(y(r))<W(x(s)).
\end{equation}
As above, this contradicts the density of $\mathbb{Q}$ in $\RR$. 
\end{proof}

\begin{proof}[\large \textbf{Proof of Theorem \ref{T4}}]
\noindent $(i) \Rightarrow (ii)$ is trivial and $(ii) \Rightarrow (iii)$ follows from Example \ref{Ex2}.

\noindent $(iii) \Rightarrow (i)$:
For any $x\in X$, the set $S(x) = \{s : s\in \RR,\;s=x_{n}\;$ for some $n\in \N\}$ is a non-empty subset of $Y$. 
Since $Y(<)$ is well-ordered,  we can infer that $S(x)\subset Y$ must contain a first element, and so:
\begin{equation*}
W(x) = \min \{x_{n}\}_{n\in \N}
\end{equation*}%
is well-defined.%
\footnote{This SWF satisfies Hammond equity, (see  \citet{alcantud2013}, \citet{dubey2014c}) weak Pareto, monotonicity and anonymity (see \citet{mitra2007c}) as well.}

To verify that $W$ satisfies WE, let $x$, $y\in X$, with $x_{n} < y_{n} < y_{\alpha(n)} < x_{\alpha(n)}$ for all $n\in \dom (\alpha)$.
Let $i\in \N$ be such that $W(x) = x_{i}$.
Then $x_{i} < y_{i} < y_{\alpha(i)} < x_{\alpha(i)}$ must hold true.
We claim that $\min \{y_{n}\}_{n\in \N} > x_i= W(x)$.
If not, then there exist $j\in \N$ such that $y_j\leq x_i$.
But then $\{j, \alpha(j)\}$ must be such that $x_{j} < y_{j} < y_{\alpha(j)} < x_{\alpha(j)}$.
So we get $x_{j} < y_{j} \leq x_{i}$ contradicts the fact that $\min \{x_{n}\}_{n\in \N} = x_i$.
Therefore, $W(y) =\min \{y_{n}\}_{n\in \N} > x_i= W(x)$ as is required.

To verify that $W$ satisfies M, let $x$, $y\in X$ with $y\geq x$.
Then, $W(y) = y_{m}$ for some $m\in \N$, and $y_{m} \geq x_{m}\geq W(x)$, so that $W(y)\geq W(x)$.
Moreover, it is trivial to see that $W$ satisfies AN as well.
\end{proof}

\begin{proof} [\large \textbf{Proof of Theorem \ref{T5}}]
\noindent $(i) \Rightarrow (ii)$ is trivial and $(ii) \Rightarrow (iii)$ follows from Example \ref{Ex1}.

\noindent $(iii) \Rightarrow (i)$:
Let $\underline{Y}=\inf Y$ and $\overline{Y}=\sup Y$.
Consider the following SWF for $x=\left(x_{n}\right)_{n=1}^{\infty }\in X:= Y^\N$%
\begin{equation*}
W(x) = \rho \inf \left\{\left|x_{n}-\underline{Y}\right|\right\}_{n\in \N}+(1-\rho )\inf \left\{\left|\overline{Y}-x_n\right|\right\}_{n\in \N}
\end{equation*}%
where $\rho \in (0, 1)$ is a parameter. 
We show that $W$ satisfies $\we$.
Let $x$, $z\in X$ be such that for all $t\in \N$, there exists $\alpha(t)$ with 
\[
x_t<z_t<z_{\alpha(t)}<x_{\alpha(t)}\;\text{or}\;x_{\alpha(t)}<z_{\alpha(t)}<z_t<x_t.
\]
We claim that $W(z)>W(x)$. 
Clearly, by definition of $W$, we have $W(z)\geq W(x)$ and so if the claim is false, it must be the case that%
\begin{equation}
W(z)=W(x)  \label{8P1}
\end{equation}%
Since%
\begin{equation*}
\inf \left\{\left|z_{n}-\underline{Y}\right|\right\}_{n\in \N}\geq \inf \left\{\left|x_{n}-\underline{Y}\right|\right\}_{n\in \N}\;\text{and}\; \inf \left\{\left|\overline{Y}-z_n\right|\right\}_{n\in \N}\geq \inf \left\{\left|\overline{Y}-x_n\right|\right\}_{n\in \N} \label{8P2}
\end{equation*}%
and $\rho \in (0,1)$, from (\ref{8P1}) it follows that:
\begin{equation*}
a:= \inf \left\{\left|z_{n}-\underline{Y}\right|\right\}_{n\in \N}=\inf \left\{\left|x_{n}-\underline{Y}\right|\right\}_{n\in \N}\;\text{and}\; b:=\inf  \left\{\left|\overline{Y}-z_n\right|\right\}_{n\in \N} =\inf \left\{\left|\overline{Y}-x_n\right|\right\}_{n\in \N}  \label{8P3}
\end{equation*}%
Clearly $a$, $b\in [0,1]$.
We now break up our analysis into the following cases.
\begin{enumerate}[(i)]
\item{$\left\{\left|z_{n}-\underline{Y}\right|\right\}_{n\in \N}$ has a minimum.
Let $k\in \N$ be such that $\left|z_{k}-\underline{Y}\right| = \min \left|z_{n}-\underline{Y}\right|_{n\in \N}$. 
Then, we have
\begin{equation*}
a=\inf \left \{ \left|z_{n}-\underline{Y}\right| \right\}_{n\in \N} =\min \left\{ \left|z_{n}-\underline{Y}\right| \right\}_{n\in \N} = \left|z_{k}-\underline{Y}\right|>\left|x_{k}-\underline{Y}\right|\geq \inf \left \{ \left|x_{n}-\underline{Y}\right| \right\}_{n\in N}= a,
\end{equation*}%
a contradiction.}
\item{$\left\{\left|z_{n}-\underline{Y}\right|\right\}_{n\in \N}$ does not have a minimum.
This case is further subdivided as follows.
\begin{enumerate}
\item[(a)]{$\left\{\left|z_{n}-\underline{Y}\right|\right\}_{n\in \N}$ does not have a minimum, and $\left\{\left|\overline{Y} - x_n\right|\right\}_{n\in \N}$ has a minimum.
Let $s\in \N$ be such that $\left|\overline{Y} - x_s\right|=\min \left\{\left|\overline{Y} - x_n\right|\right\}_{n\in \N}$. 
Then, we have
\begin{equation*}
b= \inf \left\{\left|\overline{Y} - x_n\right|\right\}_{n\in \N} =\min \left\{\left|\overline{Y} - x_n\right|\right\}_{n\in \N} = \left|\overline{Y} - x_s\right| < \left|\overline{Y} - z_s\right| \leq \inf \left\{\left|\overline{Y} - z_n\right|\right\}_{n\in \N} = b
\end{equation*}%
a contradiction.}
\item[(b)]{Neither $\left\{\left|z_{n}-\underline{Y}\right|\right\}_{n\in \N}$  nor $\left\{\left|\overline{Y} - x_n\right|\right\}_{n\in \N}$ has a minimum.
Let $x_1<z_1<z_{\alpha(1)}<x_{\alpha(1)}$.
Then, we can find $x_{\alpha(1)}<x_{n_{1}}<x_{n_{2}}<x_{n_{3}}<\cdots $ with $x_{n_{k}}\in \left(x_{\alpha(1)}, \overline{Y}\right)$ for $k=1, 2, 3, \cdots$, and $x_{n_{k}}\uparrow (\overline{Y}-b)$ as $k\uparrow \infty$. 
Similarly, we can find $z_1> z_{m_{1}}>z_{m_{2}}>z_{m_{3}}>\cdots $ with $z_{m_{r}}\in \left( \underline{Y}, z_1\right)$ for $r=1, 2, 3,\cdots$, and $z_{m_{r}}\downarrow (\underline{Y}-a)$ as $r\uparrow \infty$.
Therefore, we have%
\begin{equation}
(\underline{Y} -a)<\cdots z_{m_{3}}<z_{m_{2}}<z_{m_{1}}^{\prime}<c<x_{n_{1}}<x_{n_{2}}<x_{n_{3}}< \cdots <(\overline{Y} -b). \label{8P4}
\end{equation}%
Consider the set 
\[
Y^{\prime} = \{x_{n_{1}}, x_{n_{2}}, x_{n_{3}},\cdots\} \;\cup \;\{z_{m_{1}}, z_{m_{2}}, x_{m_{3}},\cdots\}.
\]
Clearly, $Y^{\prime}$ is a subset of $Y$ and because of (\ref{8P4}), we note that

\begin{itemize}
\item{the set $Y^{\prime}$ has neither a maximum nor a minimum, and}
\item{for every cut $[Y_{1}^{\prime}, Y_{2}^{\prime}]$ of $Y^{\prime}$, the set $Y_{1}^{\prime}$ has a maximal element and the set $Y_{2}^{\prime}$ has a least element.}
\end{itemize}
Thus, $Y^{\prime}(<)$ is order isomorphic to $\Z$,  a contradiction.}
\end{enumerate}}
\end{enumerate}
Since we are led to a contradiction in cases (i), (ii)(a) and (ii)(b), and these exhaust all logical possibilities, (\ref{8P1}) cannot hold, and our claim that  $W(z)>W(x)$ is established.
\end{proof}


\bibliographystyle{plainnat}
\setlength{\bibsep}{0pt}
\small{\bibliography{APAnonymity}}

\end{document}